  \newcommand{\ra}[1]{\renewcommand{\arraystretch}{#1}}
\DeclareMathOperator{\A}{\mathcal A}
\DeclareMathOperator{\D}{\mathcal D}
\newcommand{\complclass}[1]{{\sc #1}\xspace}
\newcommand{\NL}{\complclass{NL}}
\newcommand{\NP}{\complclass{NP}}
\newcommand{\PSpace}{\complclass{PSpace}}
\newcommand{\Expspace}{\complclass{ExpSpace}}
\newcommand{\PTime}{\complclass{P}}
\journal{}
\begin{document}
\begin{frontmatter}
\title{On Verification of D\mbox{-}Detectability for Discrete Event Systems\thanksref{footnoteinfo}}

\thanks[footnoteinfo]{An extended abstract of this work was presented at the Workshop on Discrete Event Systems WODES 2020~\cite{wodes2020}.
Corresponding author: T. Masopust, tel. +420222090785, fax +420541218657.}

\author[up]{Ji{\v r}{\' i}~Balun}\ead{jiri.balun01{@}upol.cz} \and
\author[up,im]{Tom{\' a}{\v s}~Masopust}\ead{masopust{@}math.cas.cz}

\address[up]{Department of Computer Science, Faculty of Science, Palacky University, 17. listopadu 12, 771 46 Olomouc, Czechia}
\address[im]{Institute of Mathematics of the Czech Academy of Sciences}

\begin{keyword}
  Discrete event systems, finite automata, state estimation, detectability, verification, complexity
\end{keyword}

\begin{abstract}
  Detectability has been introduced as a generalization of state-estimation properties of discrete event systems studied in the literature. It asks whether the current and subsequent states of a system can be determined based on observations. Since, in some applications, to exactly determine the current and subsequent states may be too strict, a relaxed notion of D\mbox{-}de\-tect\-ability has been introduced, distinguishing only certain pairs of states rather than all states. Four variants of D\mbox{-}de\-tect\-ability have been defined: strong (periodic) D\mbox{-}de\-tect\-ability and weak (periodic) D\mbox{-}de\-tect\-ability. Deciding weak (periodic) D\mbox{-}de\-tect\-ability is \PSpace-complete, while deciding strong (periodic) detectability or strong D\mbox{-}de\-tect\-ability is polynomial (and we show that it is actually \NL-complete). However, to the best of our knowledge, it is an open problem whether there exists a polynomial-time algorithm deciding strong periodic D\mbox{-}de\-tect\-ability. We solve this problem by showing that deciding strong periodic D\mbox{-}detectability is a \PSpace-complete problem, and hence there is no polynomial-time algorithm unless \PSpace $=$ \PTime. We further show that there is no polynomial-time algorithm deciding strong periodic D\mbox{-}de\-tect\-ability even for systems with a single observable event, unless \PTime $=$ \NP. Finally, we propose a class of systems for which the problem is tractable.
\end{abstract}

\end{frontmatter}

\section{Introduction}
  Detectability of discrete event systems (DESs) modeled by finite automata has been introduced by Shu et al.~\cite{ShuLinYing2007} as a generalization of other notions studied in the literature, including stability of Ozveren and Willsky~\cite{OzverenW1990} and observability of Caines et al.~\cite{CainesGW1988} or of Ramadge~\cite{Ramadge1986}. An evidence that many practical problems can be formulated as the detectability problem for DESs has been provided by Shu and Lin~\cite{ShuLin2011}. Furthermore, Lin~\cite{Lin11} has shown that detectability is closely related to other important properties, such as observability, diagnosability, and opacity.

  Detectability is a state-estimation property asking whether the current and subsequent states of a DES can be determined after a finite number of observations. Shu et al.~\cite{ShuLinYing2007} have defined four variants of detectability: strong (periodic) detectability and weak (periodic) detectability. In their work, they first studied detectability for deterministic DESs, which are DESs modeled by deterministic finite automata with a set of initial states. The motivation for a set of initial states rather than a single initial state results from the observation that it is often unknown which state the system is initially in. They proposed an exponential algorithm for deciding detectability of a deterministic DES based on the computation of an observer. Shortly after, Shu and Lin~\cite{ShuLin2011} extended the problem to nondeterministic DESs (DESs modeled by general nondeterministic finite automata) and designed an algorithm deciding strong (periodic) detectability of nondeterministic DESs in polynomial time. Deciding strong (periodic) detectability was later shown \NL-complete~\cite{Masopust2018b}, that is, the problem is efficiently solvable on a parallel computer.
  
  The complexity of deciding weak (periodic) detectability has been investigated only recently. Zhang~\cite{Zhang17} has shown that deciding weak (periodic) detectability is \PSpace-comp\-lete and that it remains \PSpace-hard even for deterministic DESs with all events observable. Masopust~\cite{Masopust2018b} further strengthened these results by proving the same complexity for structurally ``simplest'' deadlock-free DESs that are modeled by deterministic finite automata without non-trivial cycles. 
  
  Since the requirement in the definition of detectability to exactly determine the current and subsequent states after a finite number of observations may be too strict in some applications, Shu and Lin~\cite{ShuLin2011} relaxed the notion of detectability to a so-called {\em D\mbox{-}de\-tect\-ability} property. The idea behind the relaxation is to distinguish only certain pairs of states rather than all states of the system. Four variants of D\mbox{-}de\-tect\-ability have been defined: strong (periodic) D\mbox{-}de\-tect\-ability and weak (periodic) D\mbox{-}de\-tect\-ability.

  The notion of (D-)detectability has been extended in many directions. To mention a few, Shu and Lin~\cite{ShuLin2013} extended strong (D-)detectability to delayed (D-)detectability, motivated by discrete event systems with delays, and designed a polynomial-time algorithm to check strong \mbox{(D-)}de\-tect\-ability for delayed DESs. Zhang and Giua~\cite{zhang2019revisiting} have recently improved the algorithm for checking strong delayed \mbox{(D-)}de\-tect\-ability. They further introduced several other notions of detectability, see Zhang et al.~\cite{Zhang2020} for more details. 
  Alves and Basilio~\cite{AlvesB19} studied \mbox{(D-)}de\-tect\-ability for discrete event systems with multi-channel communication networks. 
  Yin and Lafortune~\cite{YinLafortune17} examined the verification of weak and strong detectability properties for modular DESs, and showed that checking both is \PSpace-hard. The exact complexities of these two problems have recently been resolved by Masopust and Yin~\cite{MasopustY19}. They are, respectively, \PSpace-complete and \Expspace-complete.
  We refer the reader to Hadjicostis~\cite{Hadjicostis2020} for the latest development of state-estimation properties. 
  
  Since detectability is a special case of D\mbox{-}de\-tect\-ability, deciding D\mbox{-}de\-tect\-ability is at least as hard as deciding detectability. An immediate consequence is that the complexity of deciding whether a DES satisfies weak (periodic) D\mbox{-}de\-tect\-ability is \PSpace-complete.
  
  The case of strong D\mbox{-}de\-tect\-ability is similar to that of strong detectability. For strong (periodic) detectability, Shu and Lin~\cite{ShuLin2011} designed a detector that can decide, in polynomial time, whether a DES satisfies strong (periodic) detectability. They have further shown that their detector is also suitable for deciding strong D\mbox{-}de\-tect\-ability. Consequently, the complexity of verifying whether a DES satisfies strong D\mbox{-}de\-tect\-ability is polynomial; see Zhang and Giua~\cite{zhang2019revisiting} for details on the algorithmic complexity. 
  We further improve this result by showing that deciding whether a DES satisfies strong D\mbox{-}de\-tect\-ability is \NL-comp\-lete (Theorem~\ref{thm_sdnl-c}). Since \NL is the class of problems that can be efficiently parallelized, see Arora and Barak~\cite{AroraBarak2009} for details, we obtain that the verification of strong D\mbox{-}de\-tect\-ability can be efficiently verified on a parallel computer. 
  
  However, deciding strong {\em periodic\/} D\mbox{-}de\-tect\-ability is more involved. Although the de\-tec\-tor-based technique provides a polynomial-time algorithm to decide strong periodic detectability, Shu and Lin~\cite{ShuLin2011} give an example that this algorithm does not work for checking strong periodic D\mbox{-}de\-tect\-ability. They leave the question of the existence of a polynomial-time algorithm deciding strong periodic D\mbox{-}de\-tect\-ability of a DES open. To the best of our knowledge, this question has not yet been answered in the literature.
  We answer this question by showing that there does not exist any algorithm that would decide, in polynomial time, whether a DES satisfies strong periodic D\mbox{-}de\-tect\-ability (Theorem~\ref{thm_spd-ps}), unless \PTime $=$ \PSpace. 
  The question whether \PTime $=$ \PSpace is a longstanding open problem of computer science asking whether every problem solvable in polynomial space can also be solved in polynomial time. It is generally believed that it is not the case. In particular, Theorem~\ref{thm_spd-ps} shows that the strong periodic D\mbox{-}de\-tect\-ability problem is \PSpace-complete. Formulated differently, the result says that the technique based on the computation of the observer is in principle optimal. Notice that since \NL is a strict subclass of \PSpace, strong periodic D\mbox{-}de\-tect\-ability is significantly more complex than its non-periodic counterpart---strong D\mbox{-}de\-tect\-ability. 

  We further show that strong {\em periodic\/} D\mbox{-}de\-tect\-ability is more complex than strong D\mbox{-}de\-tect\-ability even for systems having only a single observable event. Namely, we show that strong periodic D\mbox{-}de\-tect\-ability cannot be verified in polynomial time even for DESs that have only a single observable event (Theorem~\ref{np-c}), unless \PTime $=$ \NP.

  Finally, we specify a class of systems for which deciding strong {\em periodic\/} D\mbox{-}de\-tect\-ability is in polynomial time. Namely, we consider the class of systems modeled by NFAs where all cycles are in the form of self-loops and where there is no nondeterministic choice between a step changing the state and a step not changing the state under the same observation. These restrictions are purely structural, and the models are called {\em rpoDES}, see Section~\ref{secRpoNFAs} for details.

  Our contributions, compared with known results, are summarized in Tables~\ref{tableKnown} and~\ref{table2}.
  \begin{table*}\centering
    \ra{1.2}
    \begin{tabular}{@{}lllllllll@{}}\toprule
                      & \multicolumn{3}{l}{DESs} 
                      & \phantom{a}
                      & \multicolumn{3}{l}{rpoDESs} \\\cmidrule{2-4}\cmidrule{6-8}
                      & detectability     
                      & \multicolumn{2}{l}{D\mbox{-}de\-tect\-ability}
                      & 
                      & detectability     
                      & \multicolumn{2}{l}{D\mbox{-}de\-tect\-ability}\\
                      \cmidrule{2-4}\cmidrule{6-8}
                      & known 
                      & known
                      & new
                      & 
                      & known
                      & known
                      & new\\
      \midrule
      strong          & \NL-c~\cite{Masopust2018b}      
                      & in \PTime~\cite{ShuLin2011} 
                      & \NL-c (Thm~\ref{thm_sdnl-c})
                      & 
                      & \NL-c~\cite{Masopust2018b}
                      & in \PTime~\cite{ShuLin2011} 
                      & \NL-c (Thm~\ref{thm_sdnl-c} \& Cor~\ref{cor1})\\
      weak            & \PSpace-c~\cite{Zhang17}  
                      & \PSpace-c 
                      & ---
                      & 
                      & \PSpace-c~\cite{Masopust2018b}
                      & \PSpace-c
                      & ---\\
      strong periodic & \NL-c~\cite{Masopust2018b}      
                      & ?
                      & \PSpace-c (Thm~\ref{thm_spd-ps})
                      & 
                      & \NL-c~\cite{Masopust2018b}
                      & ?
                      & \NL-c (Thm~\ref{thmRpoNFAs}) \\
      weak periodic   & \PSpace-c~\cite{Zhang17}
                      & \PSpace-c 
                      & ---
                      & 
                      & \PSpace-c~\cite{Masopust2018b}
                      & \PSpace-c
                      & ---\\
      \bottomrule
    \end{tabular}
    \medskip
    \caption{Summary of known and new results for DESs and rpoDESs; ? means that the problem was open; results easily derivable from existing results are also placed among known results}
    \label{tableKnown}
  \end{table*}

  \begin{table*}\centering
    \ra{1.2}
    \begin{tabular}{@{}llllll@{}}\toprule
                      & \multicolumn{2}{l}{unary DESs}
                      & \phantom{a}
                      & \multicolumn{2}{l}{unary rpoDESs}\\
                      \cmidrule{2-3}\cmidrule{5-6}
                      & detectability
                      & D\mbox{-}de\-tect\-ability
                      &
                      & detectability
                      & D\mbox{-}de\-tect\-ability \\
        \midrule
      strong          & \NL-c~\cite{Masopust2018b}
                      & \NL-c (Thm~\ref{thm_sdnl-c})
                      &
                      & \NL-c (Cor~\ref{cor1})
                      & \NL-c (Cor~\ref{cor1})\\
      weak            & \NL-c (Thm~\ref{thm6})
                      & \NL-c (Thm~\ref{thm6})
                      &
                      & \NL-c (Thm~\ref{thm6}) 
                      & \NL-c (Thm~\ref{thm6}) \\
      strong periodic & \NL-c~\cite{Masopust2018b}
                      & \NP-c (Thm~\ref{np-c})
                      &
                      & \NL-c (Thm~\ref{thmRpoNFAs})
                      & \NL-c (Thm~\ref{thmRpoNFAs}) \\
      weak periodic   & \NL-c (Thm~\ref{thm6})
                      & \NP-c (Thm~\ref{thm6})
                      &
                      & \NL-c (Thm~\ref{thm6})
                      & \NL-c (Thm~\ref{thm6}) \\
      \bottomrule
    \end{tabular}
    \caption{Summary of known and new results for DESs and rpoDESs with a single observable event}
    \label{table2}
  \end{table*}

\section{Preliminaries and Definitions}
  For a set $A$, $|A|$ denotes the cardinality of $A$ and $2^{A}$ its power set. An alphabet $\Sigma$ is a finite nonempty set of events. A string over $\Sigma$ is a sequence of events of $\Sigma$. Let $\Sigma^*$ denote the set of all finite strings over $\Sigma$; the empty string is denoted by $\varepsilon$. For a string $u \in \Sigma^*$, $|u|$ denotes its length. As usual, the notation $\Sigma^+$ stands for $\Sigma^*\setminus\{\varepsilon\}$. 

  A {\em nondeterministic finite automaton\/} (NFA) over an alphabet $\Sigma$ is a structure $\A = (Q,\Sigma,\delta,I,F)$, where $Q$ is a finite set of states, $I\subseteq Q$ is a set of initial states, $F \subseteq Q$ is a set of marked states, and $\delta \colon Q\times\Sigma \to 2^Q$ is a transition function that can be extended to the domain $2^Q\times\Sigma^*$ by induction. The {\em language recognized by $\A$\/} is the set $L(\A) = \{w\in \Sigma^* \mid \delta(I,w)\cap F \neq\emptyset\}$. 
  Equivalently, the transition function $\delta$ is a relation $\delta \subseteq Q\times \Sigma \times Q$, where, for instance, $\delta(q,a)=\{s,t\}$ denotes the two transitions $(q,a,s)$ and $(q,a,t)$.

  The NFA $\A$ is {\em deterministic\/} (DFA) if it has a unique initial state, i.e., $|I|=1$, and no nondeterministic transitions, i.e., $|\delta(q,a)|\le 1$ for every $q\in Q$ and $a \in \Sigma$. The DFA $\A$ is {\em total\/} if in every state, a transition under every event is defined, i.e., $|\delta(q,a)|=1$ for every $q\in Q$ and $a \in \Sigma$. For DFAs, we identify singletons with their elements and simply write $p$ instead of $\{p\}$. Specifically, we write $\delta(q,a)=p$ instead of $\delta(q,a)=\{p\}$.

  A {\em discrete event system\/} (DES) is an NFA $G$ with all states marked. Hence we simply write $G=(Q,\Sigma,\delta,I)$ leaving out the set of marked states. Additionally, the alphabet $\Sigma$ is partitioned into the set $\Sigma_o$ of {\em observable events\/} and the set $\Sigma_{uo}=\Sigma\setminus\Sigma_o$ of {\em unobservable events}. 
  
  State-estimation properties are based on the observation of events. The observation is described by projections. The projection $P\colon \Sigma^* \to \Sigma_o^*$ is a morphism defined by $P(a) = \varepsilon$ for $a\in \Sigma\setminus \Sigma_o$, and $P(a)= a$ for $a\in \Sigma_o$. The action of $P$ on a string $w=a_1 a_2 \cdots a_n$, where $a_i \in \Sigma$ for $1\le i\le n$, is to erase all events from $w$ that do not belong to $\Sigma_o$; in particular, $P(a_1 a_2\cdots a_n)=P(a_1) P(a_2) \cdots P(a_n)$. The definition can readily be extended to infinite strings and languages.
  
  Shu and Lin~\cite{ShuLin2011} make the following two reasonable assumptions on the DES $G=(Q,\Sigma,\delta,I)$ that we adopt:
  \begin{enumerate}\label{AS}
    \item $G$ is {\em deadlock free} -- it means that for every state of the system, at least one event can occur; formally, for every $q\in Q$, there exists $\sigma \in \Sigma$ such that $\delta(q,\sigma)\neq\emptyset$.
  
    \item No loop in $G$ consists solely of unobservable events -- for every $q\in Q$ and every $w \in \Sigma_{uo}^+$, $q\notin \delta(q,w)$.
  \end{enumerate}
  We point out that to verify whether a system satisfies these two properties is very easy. The violation of any of the properties is often considered a modeling error. Moreover, omitting the conditions does not change our results. 

  The set of infinite sequences of events (or trajectories) generated by the DES $G$ is denoted by $L^\omega (G)$. 
  Given a set $Q' \subseteq Q$, the set of all possible states after observing a string $t \in \Sigma_o^*$ is denoted by 
  \[
    R(Q',t) = \bigcup_{w \in \Sigma^*, P(w) = t} \delta(Q',w)\,.
  \]
  For $w \in L^\omega (G)$, we denote the set of its prefixes by $Pr(w)$.

\subsection{A Brief Complexity Review}
  We now briefly review the basics of complexity theory needed to understand the results. A {\em decision problem\/} is a yes-no question. A decision problem is {\em decidable\/} if there exists an algorithm that can solve the problem. Complexity theory classifies decidable problems to classes according to the time or space an algorithm needs to solve the problem. The complexity classes we consider in this paper are \NL, \PTime, \NP, and \PSpace. They denote the classes of problems that are solvable by a nondeterministic logarithmic-space, deterministic polynomial-time, nondeterministic polynomial-time, and deterministic polynomial-space algorithm, respectively. The hierarchy of classes is \NL $\subseteq$ \PTime $\subseteq$ \NP $\subseteq$ \PSpace. Which of the inclusions are strict is a longstanding open problem in computer science. The widely accepted conjecture is that all inclusions are strict. However, so far only the inclusion \NL $\subseteq$ \PSpace is known to be strict. A decision problem is \NL-complete (resp. \NP-complete, \PSpace-complete) if it belongs to \NL (resp. \NP, \PSpace) and every problem from \NL (resp. \NP, \PSpace) can be reduced to it by a deterministic logarithmic-space (resp. polynomial-time) algorithm.

\section{Definitions of the D-Detectability Problems}
  Shu and Lin~\cite{ShuLin2011} defined D\mbox{-}de\-tect\-ability as a generalization of detectability by making the states that need to be distinguished explicit. 
 
  Let $G=(Q,\Sigma,\delta,I)$ be a DES, and let $T_{spec}\subseteq Q\times Q$ be a relation on the set of states of $G$. The relation $T_{spec}$ specifies pairs of states that must be distinguished, and is therefore called a {\em specification}. The idea behind the definition of D\mbox{-}de\-tect\-ability is to ensure that the pairs of states from $T_{spec}$ are distinguished after a finite number of observations.
  We now recall the definitions of the four variants of D\mbox{-}de\-tect\-ability.
    
  A DES $G=(Q,\Sigma,\delta,I)$ is {\em strongly D\mbox{-}de\-tect\-able\/} with respect to projection $P\colon \Sigma^* \to \Sigma_{o}^*$ and a specification $T_{spec}$ if, for all trajectories of the system, the pairs of states of $T_{spec}$ can be distinguished in every step of the system after a finite number of observations. This is formally defined as follows:
  \begin{multline*}
    (\exists n \in \mathbb{N})(\forall s \in L^{\omega} (G))(\forall t \in Pr(s))\, 
    |P (t)| > n \\ \Rightarrow  (R(I,P(t))\times R(I,P(t))) \cap T_{spec} = \emptyset \,.
  \end{multline*}

  A DES $G=(Q,\Sigma,\delta,I)$ is {\em weakly D\mbox{-}de\-tect\-able\/} with respect to projection $P\colon \Sigma^* \to \Sigma_{o}^*$ and a specification $T_{spec}$ if, for some trajectories of the system, the pairs of states of $T_{spec}$ can be distinguished in every step of the system after a finite number of observations. This is formally defined as follows:
  \begin{multline*}
    (\exists n \in \mathbb{N})(\exists s \in L^{\omega}(G))(\forall t \in Pr(s)) \, |P (t)| > n \\ 
    \Rightarrow  ( R(I,P(t)) \times R(I,P(t)) ) \cap T_{spec} = \emptyset \,.
  \end{multline*}

  A DES $G=(Q,\Sigma,\delta,I)$ is {\em strongly periodically D\mbox{-}de\-tect\-able\/} with respect to projection $P\colon \Sigma^* \to \Sigma_{o}^*$ and a specification $T_{spec}$ if the pairs of states of $T_{spec}$ can be periodically distinguished for all trajectories of the system. Formally,
  \begin{multline*}
    (\exists n \in \mathbb{N})(\forall s \in L^{\omega}(G))(\forall t \in Pr(s))(\exists t' \in \Sigma^*)\, \\
    tt'\in Pr(s) \,\land\, |P(t')| < n \\ 
    \,\land\, ( R(I,P(tt')) \times R(I,P(tt')) ) \cap T_{spec} = \emptyset \,.
  \end{multline*}

  A DES $G=(Q,\Sigma,\delta,I)$ is {\em weakly periodically D\mbox{-}de\-tect\-able\/} with respect to projection $P\colon \Sigma^* \to \Sigma_{o}^*$ and a specification $T_{spec}$ if the pairs of states of $T_{spec}$ can be periodically distinguished for some trajectories of the system. Formally,
  \begin{multline*}
    (\exists n \in \mathbb{N})(\exists s \in L^{\omega}(G))(\forall t \in Pr(s))(\exists t' \in \Sigma^*)\, \\
    tt'\in Pr(s) \,\land\, |P(t')| < n \\ 
    \,\land\, ( R(I,P(tt')) \times R(I,P(tt')) ) \cap T_{spec} = \emptyset \,.
  \end{multline*}

\section{Results}
  We now discuss the complexity of deciding whether a DES satisfies D\mbox{-}de\-tect\-ability.
  As already pointed out in the introduction, the complexity of checking whether a DES satisfies weak (periodic) D\mbox{-}de\-tect\-ability follows directly from the complexity of checking weak (periodic) detectability. Indeed, a polynomial space is sufficient for an algorithm based on the inspection of states in the observer and works for all the D\mbox{-}de\-tect\-ability variants. Therefore, deciding weak (periodic) D\mbox{-}de\-tect\-ability is in \PSpace. On the other hand, detectability is a special case of D\mbox{-}de\-tect\-ability for $T_{spec}=Q\times Q \setminus \{(q,q) \mid q\in Q\}$. Therefore, deciding weak (periodic) D\mbox{-}de\-tect\-ability is at least as hard as deciding weak (periodic) detectability. Since the latter is \PSpace-hard, so is the former.

\subsection{Verification of Strong D-Detectability}
  Shu and Lin~\cite{ShuLin2011} designed an algorithm that verifies strong (periodic) detectability in polynomial time. Their algorithm is based on the construction of a finite automaton called a {\em detector}. Intuitively, given a DES $G$, their detector $G_{det}$ is constructed from $G$ so that
  (i) the set of initial states of $G_{det}$ is the set of all states of $G$ reachable from the initial states of $G$ under strings consisting only of unobservable events,
  (ii) all the other states of $G_{det}$ are one- or two-element subsets of the set of states of $G$, and
  (iii) the transition relation of $G_{det}$ is constructed in the similar way as that of the observer, but if the reached state $X$ in the observer consists of more than two states, then the detector $G_{det}$ has several transitions each leading to a two-element subset of $X$, see Shu and Lin~\cite{ShuLin2011} for details.
  Since the states of the detector are one- or two-element subsets, their number is polynomial. 
   
  Shu and Lin~\cite{ShuLin2011} showed that a DES $G$ satisfies strong (periodic) detectability if and only if any state reachable from any loop in $G_{det}$ consists solely (periodically) of distinguishable states.
  They further proved that their algorithm, respectively the detector, works for checking whether a DES satisfies strong D\mbox{-}de\-tect\-ability. This in particular implies that the complexity of verifying whether a DES satisfies strong D\mbox{-}de\-tect\-ability is polynomial. Zhang and Giua~\cite{zhang2019revisiting} recently improved the algorithmic complexity of this problem. 
  
  We now discuss the computational complexity of deciding strong D\mbox{-}de\-tect\-ability and show that it is an \NL-complete problem. Consequently, since \NL is the class of problems that can be efficiently parallelized, see Arora and Barak~\cite{AroraBarak2009} for details, our result shows that the question whether a DES satisfies strong D\mbox{-}de\-tect\-ability can be efficiently verified on a parallel computer. 
  
  \begin{thm}\label{thm_sdnl-c}
    Deciding whether a DES is strongly D\mbox{-}detect\-able is an \NL-complete problem.
  \end{thm}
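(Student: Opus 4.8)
The plan is to establish both membership in \NL and \NL-hardness. For membership, I would reformulate strong D\mbox{-}de\-tect\-ability as a reachability-style question on a suitable product automaton. Following the detector idea of Shu and Lin, the relevant states are pairs $(p,q)$ of states of $G$ that are jointly reachable under a common observation; since we only track pairs, the number of such configurations is polynomial (in fact $O(|Q|^2)$), not exponential. I would first build, in logarithmic space, a graph $\mathcal{G}$ whose vertices are such pairs reachable from the initial estimate, with an edge $(p,q)\to(p',q')$ whenever some observable event (possibly preceded/followed by unobservable strings, respecting assumption~(2)) takes $\{p,q\}$ to a set containing $\{p',q'\}$. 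Strong D\mbox{-}de\-tect\-ability then fails exactly when there is a pair $(p,q)\in T_{spec}$ that is reachable from some cycle in $\mathcal{G}$ — equivalently, there is a cycle through some vertex from which a $T_{spec}$-pair is reachable. This "reachable from a cycle" condition is testable in \NL: guess the cycle vertex $v$, verify $v$ reaches itself and $v$ reaches a $T_{spec}$-pair, all by nondeterministic log-space reachability; since $\coNL = \NL$ (Immerman--Szelepcs\'enyi), the complement — strong D\mbox{-}de\-tect\-ability — is in \NL as well.

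For \NL-hardness, I would reduce from \NL-completeness of strong detectability (Masopust~\cite{Masopust2018b}): since detectability is the special case $T_{spec} = Q\times Q\setminus\{(q,q):q\in Q\}$, and this $T_{spec}$ is computable from $G$ in logarithmic space, the identity reduction suffices. Alternatively, one can reduce directly from graph non-reachability (which is \NL-complete): build a DES in which the undistinguished pairs persisting along arbitrarily long observations correspond precisely to vertices reachable from the source.

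The main technical obstacle is the careful handling of the unobservable closure within the log-space bound. Because of assumption~(2) (no unobservable cycles), the unobservable reach of any state is a DAG and can be explored by a log-space nondeterministic procedure without looping; but one must argue that the "one step of the detector" relation — i.e., does $\{p',q'\}$ lie inside $\delta(\{p,q\}, w)$ for some $w$ with $P(w)$ a single observable event — is itself \NL-checkable. This amounts to nested log-space reachability (guess $w$ event by event, tracking only a pair of current states), which is fine since \NL is closed under log-space reductions and composition. The remaining care is purely bookkeeping: verifying that the quantifier structure "$\exists n\,\forall s\,\forall t$" in the definition of strong D\mbox{-}de\-tect\-ability is equivalent to the finite graph-theoretic statement "no $T_{spec}$-pair is reachable from a cycle in $\mathcal{G}$", using deadlock-freeness to guarantee that every finite observation extends to an infinite trajectory.
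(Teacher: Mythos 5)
Your membership argument is essentially the paper's: both reduce strong D\mbox{-}de\-tect\-ability to the condition that no $T_{spec}$-pair is reachable from a cycle in the (polynomial-size, on-demand) detector/pair graph, test the complement by guessing a cycle vertex and two reachability certificates in nondeterministic log space, and invoke Immerman--Szelepcs\'enyi to close under complement; your remarks on handling the unobservable closure and on deadlock-freeness are exactly the bookkeeping the paper leaves implicit. Where you diverge is the hardness direction: your primary route is the identity reduction from strong detectability with $T_{spec}=Q\times Q\setminus\{(q,q)\mid q\in Q\}$, leaning on the known \NL-hardness of strong detectability~\cite{Masopust2018b}, whereas the paper gives a self-contained direct reduction from DAG non-reachability (your stated alternative). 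Your route is shorter and perfectly valid as a log-space reduction, but the paper's explicit construction buys something extra: the hardness instance it produces is a unary rpoNFA with a singleton specification, which is what later yields Corollary~\ref{cor1} and the unary entries of Table~\ref{table2}; the identity reduction only inherits whatever structure the underlying detectability-hardness instances happen to have. No gap either way.
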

  \begin{proof}
    We prove membership of the problem in \NL by giving a nondeterministic logarithmic-space algorithm that checks whether the condition does not hold. Since \NL is closed under complement, see Immerman~\cite{Immerman88} or Szelepcs{\'{e}}nyi~\cite{Szelepcsenyi87} for details, it shows that there is a nondeterministic logarithmic-space algorithm checking whether the condition is satisfied.
  
    To check that the property is not satisfied, our \NL algorithm guesses two states of $G_{det}$, say $x$ and $y$, where $y$ contains indistinguishable states, and verifies that
      (i)   $y$ is reachable from $x$,
      (ii)  $x$ is reachable from the initial state of $G_{det}$, and
      (iii) $x$ is in a cycle, i.e., $x$ is reachable from $x$ by a path having at least one transition.
    Notice that our algorithm does not construct the detector $G_{det}$. It only stores a constant number of states of $G_{det}$ and computes the required transitions of $G_{det}$ on demand. Therefore, our algorithm does not need more than a logarithmic space. For more details how to check reachability in \NL, we refer the reader to Masopust~\cite{Masopust2018}.
    
    \begin{figure}
      \centering
      \begin{tikzpicture}[baseline,auto,->,>=stealth,shorten >=1pt,node distance=1.5cm,thick, state/.style={circle,minimum size=7mm,inner sep=1pt,thick,draw=black,initial text=}]
        \node[state,initial above]  (1) {$s$};
        \node[state]          (2) [above of=1]  {$p$};
        \node                 (c) at ($(1)!0.5!(2)$) {};
        \node[state]          (5) [right of=2]  {$r$};
        \node[state]          (3) [right of=1,node distance=2.7cm]  {$t$};
        \node[state]          (4) [left of=c,node distance=2.7cm]  {$x$};
        \node at (2.8,1.6) {$G$};
        \path
          (2) edge node[sloped,above] {$a$} (5)
          (1) edge[bend left=10] node[pos=0.5,sloped,above] {$a$} (4)
          (2) edge[bend right=10] node[pos=0.5,sloped,above] {$a$} (4)
          (4) edge[loop above] node[pos=0.5,sloped,above] {$a$} (4)
          (3) edge[loop above] node[pos=0.5,sloped,above] {$a$} (3)
          (5) edge[bend left=11] node[pos=0.7,sloped,above] {$a$} (4)
          (1) edge[style={decorate, decoration={snake,amplitude=.4mm,segment length=1.7mm,post length=1.3mm}}] node{?} (3) ;
          ;
        \begin{pgfonlayer}{background}
          \path (2.north -| 3.east) + (0.1,0.1)    node (a) {};
          \path (1.south -| 1.west) + (-0.3,-0.1)  node (b) {};
          \path[rounded corners,draw=black,thick] (a) rectangle (b);
        \end{pgfonlayer}
      \end{tikzpicture}
      \caption{The DES $\A$ constructed from $G$ in the \NL-hardness proof of Theorem~\ref{thm_sdnl-c}}
      \label{fig1}
    \end{figure}
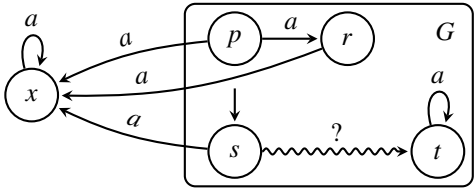    
  
    To show \NL-hardness, we reduce the {\em DAG non-reachability problem}, see Cho and Huynh~\cite{ChoH91} for details: Given a directed acyclic graph $G=(V,E)$ and two nodes $s,t\in V$, it asks whether $t$ is not reachable from $s$.
    From $G$, we construct a DES $\A=(V\cup\{x\},\{a\},\delta,s)$, where $x\notin V$ is a new state and $a$ is an observable event. For every edge $(p,r)\in E$, we add the transition $(p,a,r)$ to $\delta$, and for every $p\in V\setminus\{t\}$, we add the transition $(p,a,x)$ to $\delta$. Moreover, we add the self-loop transitions $(x,a,x)$ and $(t,a,t)$ to $\delta$. The construction is depicted in Fig.~\ref{fig1}. Notice that $\A$ is deadlock-free and has no unobservable events. 
    Let the specification $T_{spec}$ be defined as $T_{spec} = \{(t,x)\}$. We now show that $t$ is not reachable from $s$ in the graph $G$ if and only if the DES $\A$ is strongly D\mbox{-}de\-tect\-able.
    If $t$ is not reachable from $s$ in $G$, then, for every $k\ge |V|$, $\delta(s,a^k)=\{x\}$. Therefore, $\A$ is strongly D\mbox{-}de\-tect\-able.
    If $t$ is reachable from $s$ in $G$, then, for every $k\ge |V|$, $\delta(s,a^k)=\{t,x\}$. Therefore, $\A$ is not strongly D\mbox{-}de\-tect\-able, because $(t,x)\in T_{spec}$.
  \end{proof}

\subsection{Verification of Strong Periodic D-Detectability}  
  Although the de\-tec\-tor-based technique leads to a poly\-no\-mi\-al-time algorithm deciding whether a DES satisfies strong periodic detectability, Shu and Lin~\cite{ShuLin2011} have shown that this algorithm does not work for checking strong periodic D\mbox{-}de\-tect\-ability. 
  To give the reader an idea of the detector-based poly\-no\-mi\-al-time algorithm and of the problem why it does not work for checking strong periodic D\mbox{-}de\-tect\-ability, we slightly elaborate the example of Shu and Lin~\cite{ShuLin2011}.

  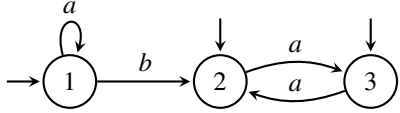
\begin{figure}
    \centering
    \begin{tikzpicture}[baseline,auto,->,>=stealth,shorten >=1pt,node distance=2cm,thick, state/.style={circle,minimum size=7mm,inner sep=1pt,thick,draw=black,initial text=}]
      \node[state,initial]  (1) {$1$};
      \node[state,initial above]  (2) [right of=1]  {$2$};
      \node[state,initial above]  (3) [right of=2]  {$3$};
      \path
        (1) edge[loop above] node[sloped,above] {$a$} (1)
        (1) edge node[sloped,above] {$b$} (2)
        (2) edge[bend left=20] node[sloped,above] {$a$} (3)
        (3) edge[bend left=20] node[sloped,above] {$a$} (2)
      ;
      \end{tikzpicture}
    \caption{The DES $G$ from Example~\ref{ex1}}
    \label{fig2}
  \end{figure}

  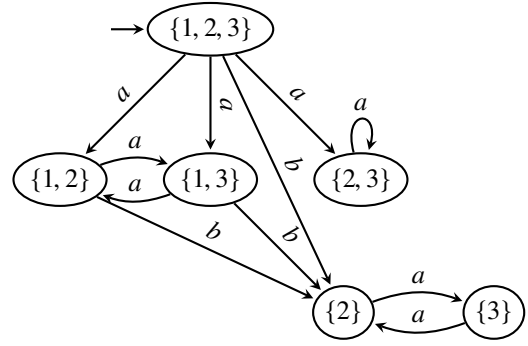
\begin{figure}
    \centering
    \begin{tikzpicture}[baseline,auto,->,>=stealth,shorten >=1pt,node distance=2cm,thick, state/.style={ellipse,minimum size=7mm,inner sep=1pt,thick,draw=black,initial text=}]
      \node[state,initial]  (1) {$\{1,2,3\}$};
      \node[state]  (3) [below of=1]  {$\{1,3\}$};
      \node[state]  (2) [left  of=3]  {$\{1,2\}$};
      \node[state]  (4) [right  of=3]  {$\{2,3\}$};
      \node[state]  (5) [below right of=3,node distance=2.5cm]  {$\{2\}$};
      \node[state]  (6) [right of=5,node distance=2cm]  {$\{3\}$};
      \path
        (1) edge node[sloped,above] {$a$} (2)
        (1) edge node[sloped,above] {$a$} (3)
        (1) edge node[sloped,above] {$a$} (4)
        (4) edge[loop above] node[sloped,above] {$a$} (4)
        (2) edge[bend left=20] node[sloped,above] {$a$} (3)
        (3) edge[bend left=20] node[sloped,above] {$a$} (2)
        (1) edge node[sloped,above] {$b$} (5)
        (2) edge node[sloped,above] {$b$} (5)
        (3) edge node[sloped,above] {$b$} (5)
        (5) edge[bend left=20] node[sloped,above] {$a$} (6)
        (6) edge[bend left=20] node[sloped,above] {$a$} (5)
      ;
      \end{tikzpicture}
    \caption{Detector $G_{det}$ constructed from the DES $G$ of Example~\ref{ex1}}
    \label{figDet}
  \end{figure}

  \begin{exmp}\label{ex1}
    Let $G=(\{1,2,3\},\{a,b\},\delta,\{1,2,3\})$ be the DES depicted in Fig.~\ref{fig2}, where both events are observable. Let the specification $T_{spec}=\{(1,3)\}$. The detector $G_{det}$ is depicted in Fig.~\ref{figDet}.
    In $G_{det}$, we can immediately see that $G$ is not strongly D\mbox{-}de\-tect\-able, since there is an infinite path that goes periodically through state $\{1,3\}$, violating thus strong D\mbox{-}de\-tect\-ability.
    From the same infinite path going periodically through the states $\{1,2\}$ and $\{1,3\}$, the reader could get an impression that $G$ is strongly periodically D\mbox{-}de\-tect\-able. However, this is not the case as can be seen from the observer depicted in Fig.~\ref{fig3}. There, there is an infinite trajectory $a^\omega$ in state $\{1,2,3\}$ that violates strong periodic D\mbox{-}de\-tect\-ability.
    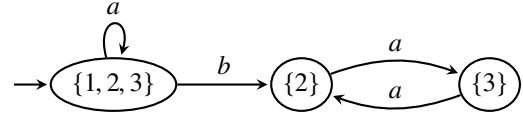
\begin{figure}
      \centering
      \begin{tikzpicture}[baseline,auto,->,>=stealth,shorten >=1pt,node distance=2.5cm,thick, state/.style={ellipse,minimum size=7mm,inner sep=1pt,thick,draw=black,initial text=}]
        \node[state,initial]  (1) {$\{1,2,3\}$};
        \node[state]  (2) [right of=1]  {$\{2\}$};
        \node[state]  (3) [right of=2]  {$\{3\}$};
        \path
          (1) edge[loop above] node[sloped,above] {$a$} (1)
          (1) edge node[sloped,above] {$b$} (2)
          (2) edge[bend left=20] node[sloped,above] {$a$} (3)
          (3) edge[bend left=20] node[sloped,above] {$a$} (2)
        ;
        \end{tikzpicture}
      \caption{The observer of the DES $G$ from Example~\ref{ex1}}
      \label{fig3}
    \end{figure}
  \hfill$\diamond$\end{exmp}
  
  Shu and Lin~\cite{ShuLin2011} have left the question whether there exists a polynomial-time algorithm deciding strong periodic D\mbox{-}de\-tect\-ability of a DES open. To the best of our knowledge, this question has not yet been answered in the literature. We answer this question in the sequel.
  We distinguish two cases based on the number of observable events in the system: 
  \begin{description}
    \item[(i)] The general case where the system has two or more observable events;
    \item[(ii)] A special case where the system  has only a single observable event.
  \end{description}

\subsection*{The case of two or more observable events}
  As pointed out above, the problem whether a DES satisfies strong or weak (periodic) D\mbox{-}de\-tect\-ability is in \PSpace. In this section, we show that deciding strong periodic D\mbox{-}de\-tect\-ability is \PSpace-hard, and hence \PSpace-complete. Consequently, there is no algorithm solving this problem in polynomial time, unless \PTime $=$ \PSpace. 

  \begin{thm}\label{thm_spd-ps}
      Deciding whether a DES is strongly periodically D\mbox{-}de\-tect\-able is a \PSpace-complete problem. The problem is \PSpace-hard even if the DES has only two observable and no unobservable events.
  \end{thm}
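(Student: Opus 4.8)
The plan is to prove the two halves of the statement separately: membership in \PSpace, which is essentially the observer-based argument already available for every D\mbox{-}detectability variant, and \PSpace-hardness, which is the real content. The first step, common to both, is to reformulate the property through the observer $G_{obs}$ of $G$ (the accessible part of its subset automaton; when $G$ has no unobservable events, simply the subset automaton of $G$ read as an NFA over $\Sigma_o$). Call a reachable state $X$ of $G_{obs}$ \emph{clean} if $(X\times X)\cap T_{spec}=\emptyset$ and \emph{dirty} otherwise. A pumping argument, combined with a short look at the ``residual'' transition graph of $G$ along a candidate cycle of $G_{obs}$ (to turn a cyclic path in $G_{obs}$ back into a genuine infinite trajectory of $G$: a finite digraph in which every vertex has in-degree at least one has a cycle), gives the characterization: $G$ is \emph{not} strongly periodically D\mbox{-}detectable if and only if $G_{obs}$ has a cycle, reachable from its initial state, all of whose states are dirty. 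Membership in \PSpace\ then follows in the usual way: a nondeterministic machine guesses a dirty state $X$ (a subset of $Q$, of polynomial size), checks that $X$ is reachable from the initial state of $G_{obs}$ and that there is a nonempty dirty path from $X$ to $X$, computing the transitions of $G_{obs}$ on the fly and storing only a constant number of subsets; Savitch's theorem and closure of \PSpace\ under complement finish this direction.

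For hardness I would run the characterization in reverse. When $G$ has no unobservable events, ``the estimate after $w$ is dirty'' is exactly the condition ``$w\in L(N)$'' for the NFA $N$ obtained from $G$ by a product-style marking of the pairs in $T_{spec}$; conversely, given \emph{any} NFA $N$ over $\{a,b\}$ one can build a DES $G$ over $\{a,b\}$ with both events observable, no unobservable events, and a specification $T_{spec}$ so that the estimate of $G$ after $w$ is dirty precisely when $w\in L(N)$. It suffices to take $G$ to be $N$ (as a DES, with dead ends made absorbing) together with one extra initial, absorbing, self\mbox{-}looping state $\xi$, and to set $T_{spec}=\{(\xi,f)\mid f\text{ an accepting state of }N\}$: then $\xi$ lies in every estimate, every word over $\{a,b\}$ is realizable by a trajectory of $G$, and the estimate after $w$ contains a $T_{spec}$-pair iff $\delta_N(\cdot,w)$ meets the accepting set, i.e.\ iff $w\in L(N)$. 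Hence, up to polynomial-time reductions, deciding strong periodic D\mbox{-}detectability is the complement of the problem: given a polynomial-size NFA $N$ over $\{a,b\}$, is there an infinite word over $\{a,b\}$ all but finitely many of whose prefixes lie in $L(N)$ (equivalently, does the subset automaton of $N$ have a reachable cycle all of whose states are accepting)?

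It then remains to show that this last problem is \PSpace-hard, and here I would reduce from the acceptance problem for a deterministic Turing machine $M$ that works in space polynomial in the length of its input $y$. The NFA $N$ (over a configuration alphabet $\Gamma\cup\{\#\}$, afterwards re-encoded in binary) checks, symbol by symbol, that the input word is a prefix of a string $C_0\#C_1\#C_2\#\cdots$ in which $C_0$ is the correct initial configuration of $M$ on $y$, each $C_{i+1}$ follows from $C_i$ by one step of $M$ (a purely local ``window'' condition), and some $C_i$ is accepting; every violation is routed to a non-accepting absorbing state. If $M$ accepts $y$, then padding its (finite, forced) accepting computation by repeating the accepting configuration forever yields an infinite word all of whose prefixes lie in $L(N)$; if $M$ does not accept $y$ (it rejects or loops), then, $M$ being deterministic, for every infinite word each sufficiently long prefix either deviates from the unique computation of $M$ on $y$ and is rejected by $N$, or contains no accepting configuration and is again rejected, so no infinite word has cofinitely many prefixes in $L(N)$. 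As $N$ is of polynomial size, this gives \PSpace-hardness, hence, with membership, \PSpace-completeness; and since the DES built from $N$ uses only the two observable events $a,b$ and no unobservable events, the sharper part of the statement follows as well.

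The step I expect to be the main obstacle is the choice of source problem. Because ``dirty'' is a purely existential condition on the estimate (some pair of $T_{spec}$ is present), the failure of strong periodic D\mbox{-}detectability says only that along some infinite observation the estimate keeps meeting $T_{spec}$, and the tempting shortcut of reducing directly from NFA universality does not work: universality is about the \emph{absence} of accepting states from the subsets --- a universal condition that the subset automaton cannot track with polynomially many states --- while the condition encoded by $T_{spec}$ is positive, so the polarities clash. The hardness must instead be manufactured by encoding an entire space-bounded computation, so that the global property ``the computation accepts'' is forced through local consistency checks alone; getting this encoding right (and, for the binary-alphabet claim, doing the re-encoding cleanly without introducing spurious accepting cycles) is where the work is. This is also the structural reason why strong periodic D\mbox{-}detectability is strictly harder than (non-periodic) strong D\mbox{-}detectability: the polynomial detector of Shu and Lin keeps only subsets of size at most two and therefore loses exactly the correlations that $T_{spec}$ tests around a cycle. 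A secondary but entirely routine point is verifying the ``reachable dirty cycle'' characterization, in particular that such a cycle of $G_{obs}$ is induced by a real infinite trajectory of $G$; here the two standing assumptions on $G$ (deadlock-freeness and no purely unobservable cycle) together with the residual-graph observation settle the matter.
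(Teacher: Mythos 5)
Your membership argument and the reduction of the property to the existence of a reachable all-dirty cycle in the observer are sound and match the paper's (largely implicit) \PSpace upper bound; your intermediate reformulation --- build $G$ from an NFA $N$ plus an always-present state $\xi$ with $T_{spec}=\{\xi\}\times F$, so that non-detectability becomes ``the subset automaton of $N$ has a reachable cycle consisting entirely of accepting states'' --- is also correct and clean. The gap is in the final hardness step. The NFA $N$ you describe must \emph{accept exactly} the prefixes of valid accepting computation histories $C_0\#C_1\#\cdots$ of a polynomial-space machine, with ``every violation routed to a non-accepting absorbing state.'' Detecting a violation of the window condition between $C_i$ and $C_{i+1}$ requires comparing symbols that are $p(|y|)$ positions apart, i.e.\ remembering an entire configuration while reading the next one; that needs $|\Gamma|^{p(|y|)}$ states. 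Nondeterminism only helps in the opposite polarity (guess \emph{one} violated window and accept), which is exactly why the classical universality reductions produce the complement language. So the automaton you need is exponential, and the reduction does not go through --- this is the very polarity clash you diagnose in your last paragraph, but your construction does not escape it: verifying a conjunction of non-local window conditions is as hard for a polynomial-size NFA as verifying the absence of accepting states from subsets.

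The paper resolves this differently: it reduces from the intersection emptiness problem for total DFAs $\A_1,\ldots,\A_n$. The conjunctive condition ``$w\in\bigcap_i L(\A_i)$'' is realized as ``all of $n$ designated states $q_1^+,\ldots,q_n^+$ are simultaneously present in the estimate,'' which a nondeterministic union of the $\A_i$ achieves with polynomially many states because each conjunct is checked by its own deterministic component. The $q_i^+$ are then placed on a cycle that rotates under every event, and $T_{spec}$ consists of the single pair $(q^-,q_1^+)$ with $q^-$ always present: if some $q_i^+$ is missing (intersection empty), the rotation makes $q_1^+$ periodically missing, hence the estimate is periodically clean; if none is missing (intersection nonempty), the full set $\{q^-\}\cup\{q_1^+,\ldots,q_n^+\}$ self-loops and is forever dirty. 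A binary re-encoding of the alphabet then gives the two-observable-event claim. If you want to keep your (nicer) intermediate problem, you should prove its \PSpace-hardness by this kind of intersection-emptiness gadget rather than by computation histories.
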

  \begin{proof}
    Membership in \PSpace follows from the inspection of states of the observer that are built on demand~\cite{ShuLin2011,Zhang17,Masopust2018}. 

    To show \PSpace-hardness, we reduce the intersection emptiness problem. The problem is \PSpace-complete~\cite{Garey1990} and asks, given a sequence $\A_1,\ldots,\A_n$ of total DFAs over a common alphabet $\Sigma$ with $|\Sigma|\ge 2$, whether the language $\cap_{i=1}^{n} L(\A_i)$ is empty? Without loss of generality, we may assume that $\Sigma=\{0,1\}$.
    From $\A_1,\ldots,\A_n$, we construct a DES $G$ that is strongly periodically D\mbox{-}de\-tectable if and only if the intersection of the languages of $\A_1,\ldots,\A_n$ is empty. 

    The main idea of our proof is to construct $G$ as a nondeterministic union of the automata $\A_1,\ldots,\A_n$ together with $n+1$ new states such that all and only these states are reachable at the same time if and only if the intersection is nonempty. In the case the intersection is empty, only a strict subset of the new states can be reached at the same time. After reaching the new states, the computation remains in the new states. The new states (up to one special state) form a cycle, and hence, during any further computation, the current states are periodically rotated. This allows us to make one of the new states periodically indistinguishable from the special state, and one periodically distinguishable. 
    
    Formally, let $\A_1, \ldots, \A_n$ be total DFAs over a common alphabet $\Sigma$, and let $\A_i = (Q_i,\Sigma,\delta_i,q_{0,i},F_i)$. Without loss of generality, we may assume that the states of the DFAs are pairwise disjoint. We construct a DES $G$ as a nondeterministic union of the automata $\A_i$, i.e., $G$ contains all states and transitions of every $\A_i$, and we add $n+1$ new states $q^-$, $q_1^+,\ldots,q_n^+$ and several new transitions under a new event $a\notin\Sigma$ as depicted in Fig.~\ref{fig4}. Namely, for $i=1,\ldots,n$, we add the transition $(q,a,q^-)$ for every non-marked state $q\in Q_i \setminus F_i$, and the transition $(q,a,q_i^+)$ for every marked state $q\in F_i$. Furthermore, we add the self-loop $(q^-,\sigma,q^-)$ for every $\sigma \in \Sigma' = \Sigma \cup \{a\}$. Finally, we create a cycle on the states $Q^+=\{q_1^+,\ldots,q_n^+\}$ by adding, for every $\sigma\in\Sigma'$, the transitions $(q_i^+,\sigma,q_{i+1}^+)$, for $1\le i < n$, and the transition $(q_n^+,\sigma,q_{1}^+)$.
    The set of initial states of $G$ is the set $I=\{q^-,q_{0,1},\ldots, q_{0,n}\}$ of initial states of the automata $\A_i$ plus the newly added state $q^-$. The alphabet of $G$ is $\Sigma'=\Sigma\cup\{a\}=\{0,1,a\}$, all events of which are observable.
    
    To show that the problem is \PSpace-hard for $|\Sigma'|=2$, we modify $G$ by encoding the events of $\Sigma'$ in binary as follows. Let $b$ be a new event, and let $f\colon \Sigma^* \to\{a,b\}^*$ be a morphism defined by $f(0)=ba$ and $f(1)=bb$. Now, in $G$, we replace each transition $t=(p,0,q)$ with two transitions $(p,b,p_t)$ and $(p_t,a,q)$, where $p_t$ is a new state. Similarly, we replace each transition $r=(p,1,q)$ with two transitions $(p,b,p_r)$ and $(p_r,b,q)$, where $p_r$ is a new state; see Fig.~\ref{fig5} for an illustration how to replace the transitions of the DFAs of Fig.~\ref{fig5a}. Notice that this replacement requires to add a new state for each transition of $G$, which can indeed be done in polynomial time. This results in a DES $G'$ with the alphabet $\Sigma'=\{a,b\}$, where both $a$ and $b$ are observable.
    
    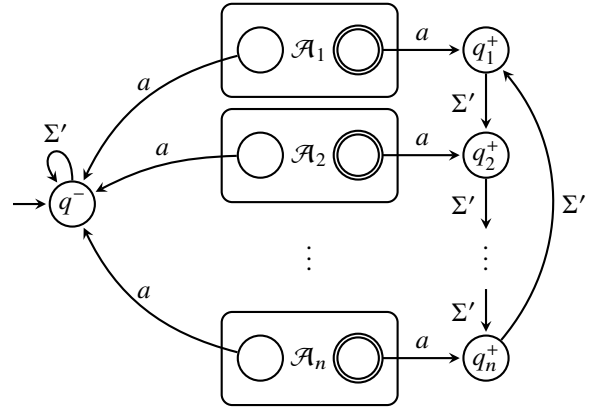
\begin{figure}
      \centering
      \begin{tikzpicture}[thick,baseline,auto,->,>=stealth,shorten >=1pt,node distance=1.3cm,state/.style={ellipse,minimum size=6mm,thick,inner sep=0pt,draw=black,initial text=},
        every node/.style={font=}]
        \node[state]  (1) {};
        \node[state,accepting]  (2) [right of=1]  {};
        \node[state]  (3) [right of=2, node distance=1.7cm]  {$q_1^+$};
        
        \node[state]  (4) [below of=1, node distance=1.4cm]  {};
        \node[state,accepting]  (5) [right of=4]  {};
        \node[state]  (6) [right of=5, node distance=1.7cm]  {$q_2^+$};

        \node[state]  (7) [below of=4,node distance=2.7cm]  {};
        \node[state,accepting]  (8) [right of=7]  {};
        \node[state]  (9) [right of=8, node distance=1.7cm]  {$q_n^+$};

        \node (c)  at ($(1)!0.5!(7)$) {};
        \node (t1) at ($(1)!0.5!(2)$) {$\A_1$};
        \node (t2) at ($(4)!0.5!(5)$) {$\A_2$};
        \node (t3) at ($(7)!0.5!(8)$) {$\A_n$};
        \node (t4) [below of=6, node distance=1.15cm]  {};
        \node (t5) [below of=6, node distance=1.65cm]  {};
        \node at ($(6)!.47!(9)$) {\vdots};
        \node at ($(t2)!.47!(t3)$) {\vdots};
        
        \node[state,initial]  (10) [left of=c, node distance=2.5cm] {$q^-$};
        
        \begin{pgfonlayer}{background}
          \path (1.north -| 2.east) + (0.2,0.3)     node (a) {};
          \path (2.south -| 1.west) + (-0.2,-0.3)   node (b) {};
          \path[rounded corners, draw=black,thick] (a)    rectangle (b);
        \end{pgfonlayer}

        \begin{pgfonlayer}{background}
          \path (4.north -| 5.east) + (0.2,0.3)     node (a) {};
          \path (5.south -| 4.west) + (-0.2,-0.3)   node (b) {};
          \path[rounded corners, draw=black,thick] (a)    rectangle (b);
        \end{pgfonlayer}

        \begin{pgfonlayer}{background}
          \path (7.north -| 8.east) + (0.2,0.3)     node (a) {};
          \path (8.south -| 7.west) + (-0.2,-0.3)   node (b) {};
          \path[rounded corners, draw=black,thick] (a)    rectangle (b);
        \end{pgfonlayer}

        \path[thick]
        (2)  edge node {$a$} (3)
        (5)  edge node {$a$} (6)
        (8)  edge node {$a$} (9)
        (3)  edge node[left] {$\Sigma'$} (6)
        (6)  edge node[left] {$\Sigma'$} (t4)
        (t5) edge node[left] {$\Sigma'$} (9)
        (1)  edge[bend right=25, above] node {$a$} (10)
        (4)  edge[bend right=13, above] node {$a$} (10)
        (7)  edge[bend left=25, above] node {$a$} (10)
        (9)  edge[bend right=40, right] node {$\Sigma'$} (3)
        ;
        \draw[thick] (10) to [out=90,in=125,looseness=8] node[above]{$\Sigma'$} (10)
        ;
      \end{tikzpicture}  
      \caption{Construction of the DES $G$ from the \PSpace-hardness part of the proof of Theorem~\ref{thm_spd-ps}}
      \label{fig4}
    \end{figure}  

    We define the specification $T_{spec}=\{(q^-,q_1^+)\}$, and show that $G'$ is strongly periodically D\mbox{-}de\-tect\-able if and only if the intersection $\cap_{i=1}^n L(\A_i)$ is empty. 
        
    Assume that the intersection is empty. A trajectory that never reaches the states of $Q^+$ cannot violate strong periodic D\mbox{-}de\-tect\-ability, because it cannot enter state $q_1^+$ from the specification. Therefore, assume that $G'$ eventually enters a state of $Q^+$. When $G'$ enters a state of $Q^+$, it leaves all states out of $Q^+\cup\{q^-\}$. Thus, let $s\in L^\omega(G')$ be an arbitrary trajectory that enters $Q^+$. Then, $s=s_1s_2$ where $G'$ generates $s_1$ in states outside $Q^+$, and $s_2$ is the part after $G'$ first enters $Q^+$. In this case, $s_1=f(w)\in\{a,b\}^*$, for some $w\in\{0,1\}^*$, and $s_2\in a\{a,b\}^{\omega}$. Then, after generating the first event of $s_2$, the observer of $G'$ is in a set of states consisting of $q^-$ and a strict subset of $Q^+$; indeed, $G'$ cannot transit to all states of $Q^+$ at the same time, since the assumption that the intersection is empty implies that, for every $w\in\Sigma^*$, there exists $i\in\{1,\ldots,n\}$ such that $w \notin L(\A_i)$. 
     
    Let $p_i\in Q^+=\{q_1^+,\ldots,q_n^+\}$ denote the state of $Q^+$ with the minimal index, in which $G'$ {\em cannot be\/} when the $i$th event of $s_2$ is generated. By construction, the cycle on $Q^+$ ensures that $p_i$ periodically alternates among $q_1^+$ and some other states of $Q^+$ when generating $s_2$. Therefore, in the infinite sequence $p_1,p_2,\ldots$, there are infinitely many $j$ such that $p_j=q_1^+$, and hence $q^-$ and $q_1^+$ are periodically distinguished, which shows that $G$ is strongly periodically D\mbox{-}de\-tect\-able.

    On the other hand, assume that the intersection is nonempty, and let $w\in \cap_{i=1}^{n} L(\A_i)$. Then, after generating the string $f(w)a$, the observer of $G'$ reaches the state $\{q^-\} \cup Q^+$. Now, every transition keeps $G'$ in all states of $\{q^-\} \cup Q^+$, and hence it results in a self-loop in the observer of $G'$. However, this self-loop violates strong periodic D\mbox{-}de\-tect\-ability, because it contains both states $q^-$ and $q_1^+$. Therefore, any trajectory $s\in L^\omega(G')$ with $f(w)a$ as its prefix leads to a set of states where the states of $T_{spec}$ can never be distinguished, and hence $G'$ is not strongly periodically D\mbox{-}de\-tect\-able.
  \end{proof}
  
  Two illustrative examples of our construction without the binary encoding can be found in the conference version~\cite{wodes2020}. Here we illustrate the binary encoding on one of the examples. Let $\A_1$ and $\A_2$ be total DFAs over the alphabet $\{0,1\}$ depicted in Fig.~\ref{fig5a}; $L(\A_1)$ consists of strings of odd length, and $L(\A_2)$ of strings of even length. Our construction results in a DES $G'$ depicted in Fig.~\ref{fig5}. Since $L(\A_1) \cap L(\A_2) = \emptyset$, $G'$ is strongly periodically D\mbox{-}de\-tect\-able, which is evident from the observer of $G_1$ depicted in Fig.~\ref{fig7}, where on every trajectory, $q^-$ and $q_1^+$ can be periodically distinguished.
  \begin{figure}
    \centering
    \begin{tikzpicture}[baseline,auto,->,>=stealth,shorten >=1pt,node distance=1.3cm,
      state/.style={ellipse,minimum size=6mm,inner sep=0pt,thick,draw=black,
        initial text=}, every node/.style={font=}]
      \node[state,initial,initial where=above] (a) {$a$};
      \node[state,accepting] (b) [right of=a, node distance=1.8cm] {$b$};
      
      \node[state] (c) [right of=b, node distance=1.8cm] {$c$};
      \node[state,initial,accepting,initial where=above] (d) [right of=c, node distance=1.8cm] {$d$};

      \node (l1) at (-0.6,0.55) {$\A_1$};
      \node (l2) at (3.1,0.55) {$\A_2$};
      
      \begin{pgfonlayer}{background}
        \path (l1.north -| b.east) + (0.2,0.1)     node (c1) {};
        \path (b.south -| l1.west) + (-0.05,-0.35) node (c2) {};
        \path[rounded corners, draw=black,thick] (c1)    rectangle (c2);
      \end{pgfonlayer}

      \begin{pgfonlayer}{background}
        \path (l2.north -| d.east) + (0.2,0.1)     node (c1) {};
        \path (d.south -| l2.west) + (-0.05,-0.35) node (c2) {};
        \path[rounded corners, draw=black,thick] (c1)    rectangle (c2);
      \end{pgfonlayer}

      \path[thick]
      (a) edge[bend left=15] node {$0,1$} (b)
      (b) edge[bend left=15] node {$0,1$} (a)
      (c) edge[bend left=15] node {$0,1$} (d)
      (d) edge[bend left=15] node {$0,1$} (c)
      ;
    \end{tikzpicture}  
    \caption{The DFAs $\A_1$ and $\A_2$ over $\Sigma=\{0,1\}$}
    \label{fig5a}
  \end{figure}
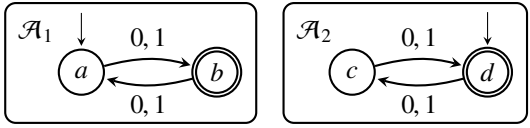  

  \begin{figure}[h]
    \centering
    \begin{tikzpicture}[baseline,auto,->,>=stealth,shorten >=1pt,node distance=1.3cm,
      state/.style={circle,minimum size=6mm,inner sep=0pt,thick,draw=black,
        initial text=}, every node/.style={font=}]
      \node[state,initial,initial where=above] (a) {$a$};
      \node[state] (aa) [above right of=a, node distance=1.6cm] {$e$};
      \node[state] (aaa) [below of=aa, node distance=.8cm] {$f$};
      \node[state] (bb) [below right of=a, node distance=1.6cm] {$h$};
      \node[state] (bbb) [above of=bb, node distance=.8cm] {$g$};
      \node[state,accepting] (b) [right of=a, node distance=2.2cm] {$b$};
      \node[state] (q+1) [right of=b, node distance=1.7cm] {$q_1^+$};
      
      \node[state] (c) [below of=a, node distance=3cm] {$c$};
      \node[state] (cc) [above right of=c, node distance=1.6cm] {$i$};
      \node[state] (ccc) [below of=cc, node distance=.8cm] {$j$};
      \node[state] (dd) [below right of=c, node distance=1.6cm] {$\ell$};
      \node[state] (ddd) [above of=dd, node distance=.8cm] {$k$};
      \node[state,initial,accepting,initial where=above] (d) [right of=c, node distance=2.2cm] {$d$};
      \node[state] (q+2) [right of=d, node distance=1.7cm] {$q_2^+$};

      \node (l1) at (-0.6,1.1) {$\A_1$};
      \node (l2) at (-0.6,-1.9) {$\A_2$};
      \node (temp) at ($(a)!0.5!(c)$) {};      
      \node[state,initial] (q-) [left of=temp, node distance=2.5cm] {$q^-$};
      
      \begin{pgfonlayer}{background}
        \path (l1.north -| b.east) + (0.2,0.1)     node (c1) {};
        \path (bb.south -| l1.west) + (-0.01,-0) node (c2) {};
        \path[rounded corners, draw=black,thick] (c1)    rectangle (c2);
      \end{pgfonlayer}

      \begin{pgfonlayer}{background}
        \path (l2.north -| d.east) + (0.2,0.1)     node (c1) {};
        \path (dd.south -| l2.west) + (-0.01,0) node (c2) {};
        \path[rounded corners, draw=black,thick] (c1)    rectangle (c2);
      \end{pgfonlayer}
 
      \path[thick]
      (b) edge node {$a$} (q+1)
      (d) edge node {$a$} (q+2)
      (a) edge[bend left=15] node[above] {$b$} (aa)
      (a) edge[bend left=15] node[above] {$b$} (aaa)
      (aa) edge[bend left=15] node[above] {$b$} (b)
      (aaa) edge[bend left=15] node[above] {$a$} (b)
      (b) edge[bend left=15] node[above] {$b$} (bb)
      (b) edge[bend left=15] node[above] {$b$} (bbb)
      (bb) edge[bend left=15] node[above] {$b$} (a)
      (bbb) edge[bend left=15] node[above] {$a$} (a)
      (c) edge[bend left=15] node[above] {$b$} (cc)
      (c) edge[bend left=15] node[above] {$b$} (ccc)
      (cc) edge[bend left=15] node[above] {$b$} (d)
      (ccc) edge[bend left=15] node[above] {$a$} (d)
      (d) edge[bend left=15] node[above] {$b$} (dd)
      (d) edge[bend left=15] node[above] {$b$} (ddd)
      (dd) edge[bend left=15] node[above] {$b$} (c)
      (ddd) edge[bend left=15] node[above] {$a$} (c)
      (q+1) edge[bend left=15] node {$a,b$} (q+2)
      (q+2) edge[bend left=15] node {$a,b$} (q+1)
      (a) edge[bend right=15, above] node {$a$} (q-)
      (c) edge[bend left=15, above] node {$a$} (q-)
      (q-) edge[loop above] node{$a,b$} (q-)
      ;
    \end{tikzpicture}  
    \caption{The DES $G'$ with $\Sigma'=\{a,b\}$, where $0$ is encoded as $ba$ and $1$ as $bb$}
    \label{fig5}
  \end{figure}
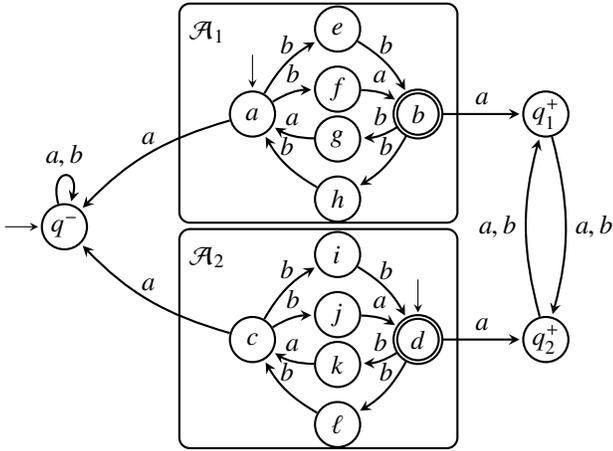  
  
  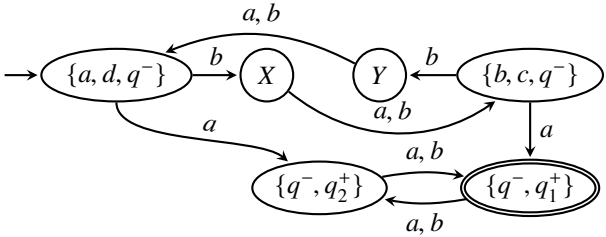
\begin{figure}
    \centering
    \begin{tikzpicture}[baseline,auto,->,>=stealth,shorten >=1pt,node distance=2cm,
      thick, state/.style={ellipse,minimum size=7mm,inner sep=1pt,thick,draw=black,
        initial text=}]
      \node[state,initial] (adq-) {$\{a,d,q^-\}$};
      \node[state] (t1) [right of=adq-, node distance=2cm] {$X$};
      \node[state] (t2) [right of=t1, node distance=1.5cm] {$Y$};
      \node[state] [right of=t2, node distance=2cm] (bcq-) {$\{b,c,q^-\}$};
      \node[state,accepting] [below of=bcq-, node distance=1.5cm] (q-q+1) {$\{q^-,q_1^+\}$};
      \node[state] [left of=q-q+1, node distance=2.8cm] (q-q+2) {$\{q^-,q_2^+\}$};

      \path
      (adq-)  edge[out=270,in=140, looseness=0.6] node[sloped] {$a$} (q-q+2)
      (bcq-)  edge node {$a$} (q-q+1)
      (q-q+2) edge[bend left=10] node {$a,b$} (q-q+1)
      (q-q+1) edge[bend left=10] node {$a,b$} (q-q+2)
      (adq-)  edge node {$b$} (t1)
      (t1)  edge[bend right=37] node[sloped,above] {$a,b$} (bcq-)
      (bcq-)  edge node[above] {$b$} (t2)
      (t2)  edge[bend right=25] node[sloped,above] {$a,b$} (adq-)
      ;
    \end{tikzpicture}  
    \caption{The observer of $G'$; states marked by double circles contain indistinguishable states of $G'$; here $X=\{e,f,k,\ell,q^-\}$ and $Y=\{g,h,i,j,q^-\}$}
    \label{fig7}
  \end{figure}

\subsection*{The case of a single observable event}
  In the previous subsection, we have shown that deciding strong periodic D\mbox{-}de\-tect\-ability is \PSpace-complete for DESs with at least two observable events. We now show that the problem is still more difficult than its non-periodic counterpart even for DESs having only a single observable event.

  \begin{thm}\label{np-c}
    Deciding strong periodic D\mbox{-}de\-tect\-ability for DESs with a single observable event is \NP-complete.
  \end{thm}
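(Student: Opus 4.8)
The plan is to prove both membership in \NP and \NP-hardness.

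For membership, I would exploit the fact that with a single observable event $a$ the observer of $G=(Q,\Sigma,\delta,I)$ is a deterministic automaton over the unary alphabet $\{a\}$: its reachable states form a sequence $X_0,X_1,X_2,\dots$ with $X_0=R(I,\varepsilon)$ and $X_{k+1}=R(X_k,a)$, which is ultimately periodic with a transient $\mu$ and period $\lambda$ satisfying $\mu+\lambda\le 2^{|Q|}$. Call $X_k$ \emph{good} if $(X_k\times X_k)\cap T_{spec}=\emptyset$. The first step is to show, directly from the definition and using the standing assumptions (deadlock-freeness and the absence of purely unobservable loops force every trajectory to generate infinitely many observations), that $G$ is strongly periodically D\mbox{-}detectable if and only if some $X_k$ with $k\ge\mu$ is good; by ultimate periodicity this is equivalent to: some $X_k$ with $2^{|Q|}\le k<2^{|Q|+1}$ is good (this range contains at least $\lambda$ consecutive indices, all in the periodic part). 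The \NP{} algorithm then guesses such a $k$ in binary (only $|Q|+1$ bits) and checks goodness of $X_k$. Since the operator $X\mapsto R(X,a)$ on $2^Q$ preserves unions, it is represented by a fixed Boolean matrix $M$ over $Q$ with $X_k=R(I,\varepsilon)\cdot M^{k}$, so $X_k$ is computable by repeated squaring of $M$ in polynomial time, and checking $(X_k\times X_k)\cap T_{spec}=\emptyset$ is then immediate. (\PSpace-membership already follows from Theorem~\ref{thm_spd-ps}, but this unary argument gives the sharper bound.)

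For \NP-hardness I would reduce from $3$-SAT, pushing all the difficulty into the \emph{timing} of observations rather than into $T_{spec}$: since $T_{spec}$ is a binary relation it can only express pairwise coexistence constraints, and a naive encoding of clauses as forbidden pairs would only capture $2$-CNF. Given a $3$-CNF formula $\varphi$ over $x_1,\dots,x_n$, associate to $x_i$ the $i$-th prime $p_i$ (of polynomial size) and declare that $k\in\mathbb N$ encodes the assignment ``$x_i$ true iff $p_i\mid k$''. For each clause $C_j$ over variables $x_{i_1},x_{i_2},x_{i_3}$ put in $G$ a cycle $D_j$ of length $L_j=p_{i_1}p_{i_2}p_{i_3}$ (polynomial in $n$), labelled by $a$ and advanced by one state per step, with initial state its ``$0$'' state; add one more state $q$ with an $a$-self-loop, also initial. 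All events are observable; there are no unobservable events and $G$ is deadlock-free. By the Chinese Remainder Theorem the single state of $D_j$ present in $X_k=R(I,a^k)$ records the residues $k\bmod p_{i_\ell}$, hence whether the assignment encoded by $k$ falsifies $C_j$; let $T_{spec}$ contain all pairs $(d,q)$ with $d$ a state of some $D_j$ corresponding to a falsifying residue pattern of $C_j$ (these are computed by brute force over the $\sum_j L_j$ states, in polynomial time). Then $X_k$ contains a $T_{spec}$ pair exactly when the assignment encoded by $k$ fails some clause, so $X_k$ is good exactly when that assignment satisfies $\varphi$; every assignment is encoded by a nonempty union of residue classes modulo $\prod_i p_i$, hence by infinitely many $k$, and this $G$ has transient $\mu=0$. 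Combining with the characterization from the membership part, $G$ is strongly periodically D\mbox{-}detectable iff some (equivalently, infinitely many) $k$ give a good $X_k$ iff $\varphi$ is satisfiable. The reduction is computable in polynomial time.

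The step I expect to be the crux is the design of the hardness reduction: the obstacle is exactly that $T_{spec}$ is pairwise, so any encoding that reads a whole clause off the current state estimate only yields $2$-SAT-type conditions; the resolution is to realize each $3$-clause as a single modular condition modulo a product of three small primes, so that a clause violation is witnessed by the presence of one state and is therefore detectable by a single $T_{spec}$ pair. A secondary point demanding care is the membership argument, where one must observe that, the exponential observer period notwithstanding, a good index in the periodic part (if one exists) is guaranteed already in the interval $[2^{|Q|},2^{|Q|+1})$, and that the corresponding observer state $X_k$ is computable from the binary encoding of $k$ by fast Boolean matrix exponentiation.
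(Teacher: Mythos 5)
Your proof is correct. The membership half is essentially the paper's argument: the paper likewise observes that the unary observer is a $\rho$-shaped sequence with $k+\ell\le 2^{|Q|}$, reduces strong periodic D\mbox{-}detectability to the existence of a good state in the cycle, guesses an offset in binary, and evaluates the corresponding observer state by fast Boolean matrix exponentiation. The hardness half, however, is a genuinely different reduction. The paper follows Stockmeyer and Meyer's encoding of 3CNF into unary NFA union-universality: truth values are residues $0/1$ modulo the first $n$ primes, an extra component $\A_0$ accepts all integers that are \emph{not} valid residue encodings, each clause $C_k$ yields an automaton accepting the falsifying residue class modulo $p_rp_sp_t$, and $T_{spec}$ consists of cross-component pairs with at least one marked state, so that a good observer state is exactly one avoiding all accepting states. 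Your construction short-circuits this: by declaring $x_i$ true iff $p_i\mid k$, \emph{every} integer encodes an assignment, which eliminates the need for an analogue of $\A_0$; each clause becomes a single cycle of length $p_{i_1}p_{i_2}p_{i_3}$ whose current position determines by CRT whether the clause is falsified; and $T_{spec}$ pairs each falsifying position with an always-present reference state $q$. Your version is more self-contained (no appeal to the union-universality result), makes the transient trivially $\mu=0$, and directly confronts the only real subtlety --- that a binary $T_{spec}$ can a priori express only pairwise constraints --- by packing each ternary clause into one modular condition witnessed by a single state. The paper's version has the advantage of reusing a known construction and a correctness proof that can be cited. Both reductions are polynomial and both correctness arguments go through; your characterization of strong periodic D\mbox{-}detectability as ``some good $X_k$ in the periodic part'' is exactly what the standing deadlock-freeness and no-unobservable-loop assumptions license, as you note.
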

  \begin{proof}
    Consider a DES with a single observable event $\{a\}$. If the DES has unobservable events, we can eliminate them as follows. First, we replace each unobservable transition, i.e., a transition of the form $(p,u,q)$ with $u$ being an unobservable event, by an $\varepsilon$-transition $(p,\varepsilon,q)$. Then, we use the standard technique to eliminate $\varepsilon$-transitions~\cite{Hop2007}. This elimination results in a DES and can be done in polynomial time. Therefore, without loss of generality, we may assume that the DES is of the form $G=(Q,\{a\},\delta,I)$.
  
    A proof that we can decide strong periodic D\mbox{-}de\-tect\-ability of $G$ in nondeterministic polynomial time uses a so-called {\em fast matrix multiplication technique}. The basic idea of this technique is to represent the transition function $\delta$ of $G$ as a binary matrix $M$, where $M[i,j]=1$ if and only if there is a transition from state $i$ to state $j$ in $G$. Then, for $r\ge 1$, $M^r$ represents the reachability in $G$ under the string $a^r$. Furthermore, using the fact that $M^2=M\times M$, $M^4 = M^2\times M^2$, etc., we can compute $M^r$ by $O(\log r)$ matrix multiplications, each multiplication in polynomial time. For more details and examples on this technique, we refer to Masopust~\cite{Masopust2018}.

    Assume that $G$ has $n$ states. Then, the observer of $G$ consists of a sequence of $k$ states followed by a cycle consisting of $\ell$ states, that is, the language of $G$ is $a^k (a^{\ell})^*$. Since the number of states of the observer of $G$ is at most $2^n$, $k+\ell \le 2^n$.
 
    Now, $G$ is strongly periodically D\mbox{-}de\-tect\-able if and only if there is a state $X\subseteq Q$ in the cycle of the observer of $G$ (we assume that the observer is constructed by the standard subset construction~\cite{Hop2007}) that is disjoint from the specification, that is, $X\cap T_{spec}=\emptyset$. Indeed, to check whether $X\cap T_{spec}=\emptyset$ can be done in polynomial time. It remains to show how to find $X$ in polynomial time. This means to find $m\le 2^n$ such that $\delta(I,a^{2^n+m})\cap T_{spec}=\emptyset$. However, an NP algorithm can guess $m$ in binary and verify the guess in polynomial time by computing $\delta(I,a^{2^n+m})$ using the fast matrix multiplication, cf. Masopust~\cite{Masopust2018} for more details and an example.

    To prove \NP-hardness, we use the construction of Stockmeyer and Meyer~\cite{StockmeyerM73} encoding a boolean formula in 3CNF in the form of a unary NFA.\footnote{A boolean formula is built from propositional variables, operators conjunction, disjunction, and negation, and parentheses. A formula is satisfiable if there is an assignment of $1$ ({\tt true}) and $0$ ({\tt false}) to its variables making it {\tt true}. A literal is a variable or its negation. A clause is a disjunction of literals. A formula is in conjunctive normal form (CNF) if it is a conjunction of clauses; e.g., $\varphi = (x\lor y \lor z) \land (\neg x\lor y \lor z)$ is a formula in CNF with two clauses $x\lor y \lor z$ and $\neg x\lor y \lor z$. If every clause has at most three literals, the formula is in 3CNF. Given a formula in 3CNF, 3CNF satisfiability asks whether the formula is satisfiable; e.g., $\varphi$ is satisfiable for $(x,y,z)=(0,1,0)$. 3CNF satisfiability is \mbox{\NP-complete}.} For an illustration, the reader may follow Example~\ref{ex2} in parallel with the proof.
  
    Let $\varphi$ be a formula in 3CNF with $n$ variables and $m$ clauses, and let $C_k$ be the set of literals in the $k$th clause, $k=1,\ldots,m$. The assignment to the variables is represented by a binary vector of length $n$. Let $p_1,\ldots, p_n$ be the first $n$ prime numbers. For a natural number $z$ congruent with 0 or 1 modulo $p_i$, for every $i=1,\ldots,n$, we say that $z$ satisfies $\varphi$ if the assignment 
    $
      (z \bmod p_1, z \bmod p_2,\ldots, z \bmod p_n) \in \{0,1\}^{n}
    $
    satisfies $\varphi$. 
    Let $\A_0$ be an NFA recognizing the language of the expression
    $
      \bigcup_{i=1}^{n} \bigcup_{j=2}^{p_i-1} 0^j\cdot (0^{p_i})^*
    $,
    that is, $L(\A_0) = \{ 0^z \mid \exists k \le n, z \not\equiv 0 \bmod p_k \text{ and } z \not\equiv 1 \bmod p_k \}$ is the set of all natural numbers that {\em do not\/} encode an assignment to the variables.

    For each $C_k$, we construct an NFA $\A_k$ such that if $0^z \in L(\A_k)$ and $z$ is an assignment, then $z$ does not assign 1 ({\tt true}) to any literal in $C_k$; e.g., if $C_k = \{x_{r}, \neg x_{s}, x_{t}\}$, $1 \le  r,s,t \le n$ and $r,s,t$ are distinct, let $z_k$ be the unique integer such that $0\le z_k < p_rp_sp_t$, $z_k \equiv 0 \bmod p_r$, $z_k \equiv 1 \bmod p_s$, and $z_k \equiv 0 \bmod p_t$. Then
    $
      L(\A_k) = 0^{z_k} \cdot (0^{p_rp_sp_t})^*
    $. 
    
    Now, $\varphi$ is satisfiable if and only if there exists $z$ such that $z$ encodes an assignment to $\varphi$ and $0^z \notin L(\A_k)$ for all $1\le k \le m$, which is if and only if $L(\A_0) \cup \bigcup_{k=1}^{m} L(\A_k) \neq 0^*$. The construction of all the automata $\A_0,\A_1,\ldots,\A_k$ can be done in polynomial time~\cite{StockmeyerM73}. 
    
    Let $\A$ denote the NFA obtained by taking the automata $\A_0$, $\A_1,\ldots,\A_k$ as a single NFA, and let $p = \Pi_{i=1}^{n} p_i$. If $z$ encodes an assignment to $\varphi$, then so does $z+cp$ for any natural $c$: if $z \equiv x_i \bmod p_i$, then $z + c p \equiv x_i \bmod p_i$, for every $1\le i\le n$, as well. Thus, if  $0^z \notin L(\A_k)$ for all $k$, then $0^z (0^p)^* \cap L(\A) = \emptyset$. Since both languages are infinite, the minimal DFA recognizing $L(\A)$ must have a nontrivial cycle alternating between marked and non-marked states, and hence the same holds for the observer of $\A$.

    We now show that $\varphi$ is satisfiable if and only if $\A$ is strongly periodically D\mbox{-}de\-tect\-able with respect to the specification $T_{spec}$ consisting of all pairs of states, where the states come from two different automata $\A_i$ and $\A_j$, $i\neq j$, and at least one state is marked in its automaton.
    
    Assume that $\varphi$ is satisfiable. As shown above, this is if and only if $L(\A)\neq 0^*$. We have further shown that $L(\A)$ is infinite and that the observer of $\A$ consists of a single trajectory with a non-marked state, $X$, in its cycle part, i.e., $X$ contains only non-marked states of the automata $\A_0,\ldots,\A_k$. Since $T_{spec}$ consists of pairs of states of two different automata $\A_i$ and $\A_j$, $0\le i\neq j\le k$, with at least one state marked in its automaton, we have that $(X\times X) \cap T_{spec}=\emptyset$. Therefore, $\A$ is strongly periodically D\mbox{-}de\-tect\-able with respect to the specification $T_{spec}$.
        
    On the other hand, assume that $\varphi$ is not satisfiable. As shown above, this is if and only if $L(\A)=0^*$. But then every state of the observer of $\A$ must be marked, i.e., every state, $X$, of the observer contains a marked state of some $\A_i$, $0\le i\le k$, and hence $(X\times X) \cap T_{spec} \neq \emptyset$. Therefore, $\A$ is not strongly periodically D\mbox{-}de\-tect\-able with respect to $T_{spec}$.
  \end{proof}
    
  The following example illustrates the construction.
  \begin{figure}
    \centering
    \begin{tikzpicture}[baseline,auto,->,>=stealth,shorten >=1pt,node distance=1.5cm,thick, state/.style={ellipse,minimum size=7mm,inner sep=1pt,thick,draw=black,initial text=}]
    \node[state,initial]  (1) {$a$};
    \node[state]  (2) [right of=1] {$b$};
    \node[state,accepting]  (3) [right of=2]  {$c$};
    \node[state]  (4) [above right=.2cm and 1cm of 3]  {$d$};
    \node[state]  (5) [below right=.2cm and 1cm of 4]  {$e$};
    \path
      (1) edge node {$0$} (2)
      (2) edge node {$0$} (3)
      (3) edge node[sloped,above] {$0$} (4)
      (4) edge node[sloped,above] {$0$} (5)
      (5) edge[bend left=10] node[above] {$0$} (3)
    ;
    \end{tikzpicture}
    \caption{Automaton $\A_{1,0}$}
    \label{figA11}
  \end{figure}

  \begin{exmp}\label{ex2}
    Let $\varphi_1 = (x\lor y) \land (\neg x\lor y)$ and $\varphi_2 = x \land \neg x$. Obviously, $\varphi_1$ is satisfiable and $\varphi_2$ is not. For both formulae, we can construct the unary automata $\A_1$ and $\A_2$, respectively, and show that $\A_1$ is strongly periodically D\mbox{-}de\-tect\-able while $\A_2$ is not. In this paper, we construct only the automaton $\A_1$; the construction of the automaton $\A_2$ can be found in the conference version~\cite{wodes2020}.
    
    The formula $\varphi_1 = (x\lor y) \land (\neg x\lor y)$ has two variables, and therefore we set $p_1=2$ and $p_2=3$, the first two prime numbers. The automaton $\A_{1,0}$, depicted in Fig.~\ref{figA11}, recognizes the language $0^2 (0^3)^*$ of all strings that do not encode the assignment to $\varphi_1$.
    Since $\varphi_1$ consists of two clauses, we further construct two automata: $\A_{1,1}$ recognizing the language $(0^6)^*$, and $\A_{1,2}$ recognizing the language $0^3 (0^6)^*$; the reader can verify that if $0^z \in L(\A_{1,1})$ and $z$ is an assignment, then $z$ assigns {\tt true} neither to $x$ nor to $y$, and if $0^z \in L(\A_{1,2})$ and $z$ is an assignment, then $z$ assigns {\tt true} to $x$ (that is, it assigns {\tt false} to the literal $\neg x$) and {\tt false} to $y$. The automata are depicted in Figs.~\ref{figA12} and~\ref{figA13}, respectively.
    \begin{figure}
      \centering
      \begin{tikzpicture}[baseline,auto,->,>=stealth,shorten >=1pt,node distance=1.4cm,thick, state/.style={ellipse,minimum size=7mm,inner sep=1pt,thick,draw=black,initial text=}]
      \node[state,initial,accepting] (1) {$f$};
      \node[state]  (2) [above right=.2cm and .8cm of 1] {$g$};
      \node[state]  (3) [right of=2]  {$h$};
      \node[state]  (4) [right of=3]  {$i$};
      \node[state]  (5) [right of=4]  {$j$};
      \node[state]  (6) [below right=.2cm and .8cm of 5]  {$k$};
      \path 
        (1) edge[bend left=10] node[sloped,above] {$0$} (2)
        (2) edge node {$0$} (3)
        (3) edge node {$0$} (4)
        (4) edge node {$0$} (5)
        (5) edge[bend left=10] node[sloped,above] {$0$} (6)
        (6) edge[bend left=5] node[above] {$0$} (1)
      ;
      \end{tikzpicture}
      \caption{Automaton $\A_{1,1}$}
      \label{figA12}
    \end{figure}
    \begin{figure}
      \centering
      \begin{tikzpicture}[baseline,auto,->,>=stealth,shorten >=1pt,node distance=1.5cm,thick, state/.style={ellipse,minimum size=7mm,inner sep=1pt,thick,draw=black,initial text=}]
      \node[state,initial] (l) {$l$};
      \node[state] (m) [right of=l] {$m$};
      \node[state] (n) [right of=m] {$n$};
      \node[state,accepting] (1) [right of=n] {$o$};
      \node[state]  (2) [above left=.4cm and 1cm of 1] {$p$};
      \node[state]  (3) [left of=2]  {$q$};
      \node[state]  (4) [left of=3]  {$r$};
      \node[state]  (5) [left of=4]  {$s$};
      \node[state]  (6) [below left=.4cm and 1cm of 5]  {$t$};
      \path 
        (l) edge node {$0$} (m)
        (m) edge node {$0$} (n)
        (n) edge node {$0$} (1)
        (1) edge[bend right=10] node[sloped,above] {$0$} (2)
        (2) edge node[above] {$0$} (3)
        (3) edge node[above] {$0$} (4)
        (4) edge node[above] {$0$} (5)
        (5) edge[bend right=10] node[sloped,above] {$0$} (6)
        (6) edge[out=320,in=220, looseness=0.4] node {$0$} (1)
      ;
      \end{tikzpicture}
      \caption{Automaton $\A_{1,2}$}
      \label{figA13}
    \end{figure}%
    The specification
    \begin{align*}
        T_{spec} & = 
        \{c\}\times \{f,g,h,i,j,k,l,m,n,o,p,q,r,s,t\}\\
        & \cup
        \{a,b,c,d,e,l,m,n,o,p,q,r,s,t\}\times \{f\}\\
        & \cup
        \{a,b,c,d,e,f,g,h,i,j,k\}\times\{o\}\,.
    \end{align*}
    Let $\A_1$ be the NFA consisting of the automata $\A_{1,0}$, $\A_{1,1}$, and $\A_{1,2}$. The observer of $\A_{1}$ is depicted in Fig.~\ref{figA1}.
    \begin{figure}
      \centering
      \begin{tikzpicture}[baseline,auto,->,>=stealth,shorten >=1pt,node distance=2.4cm,thick, state/.style={scale=0.7,ellipse,minimum size=6mm,inner sep=2pt,thick,draw=black,initial text=}]
      \node[state,initial,accepting] (l) {$\{a,f,l\}$};
      \node[state] (m) [right of=l] {$\{b,g,m\}$};
      \node[state,accepting] (n) [right of=m] {$\{c,h,n\}$};
      \node[state,accepting] (1) [right of=n] {$\{d,i,o\}$};
      \node[state] (2) [above left=.5cm and .1cm of 1] {$\{e,j,p\}$};
      \node[state,accepting] (3) [left of=2]  {$\{c,k,q\}$};
      \node[state,accepting] (4) [left of=3]  {$\{d,f,r\}$};
      \node[state] (5) [left of=4] {$\{e,g,s\}$};
      \node[state,accepting] (6) [below left=.5cm and .1cm of 5] {$\{c,h,t\}$};
      \path 
        (l) edge node {$0$} (m)
        (m) edge node {$0$} (n)
        (n) edge node {$0$} (1)
        (1) edge[bend right=10] node[sloped,above] {$0$} (2)
        (2) edge node[above] {$0$} (3)
        (3) edge node[above] {$0$} (4)
        (4) edge node[above] {$0$} (5)
        (5) edge[bend right=10] node[sloped,above] {$0$} (6)
        (6)  edge[out=320,in=220, looseness=0.4] node[above] {$0$} (1)
      ;
      \end{tikzpicture}
      \caption{The observer of the NFA $\A_{1}$; states marked by a double circle contain indistinguishable states of $\A_{1}$}
      \label{figA1}
    \end{figure}
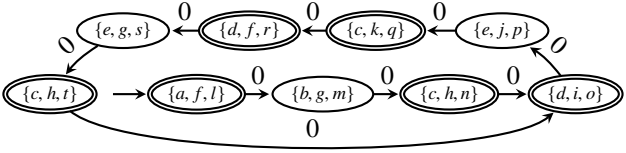
    The reader can see that the observer contains a cycle where the state $\{e,j,p\}$ appears periodically. Since this state does not contain any pair from the specification $T_{spec}$, the NFA $\A_1$ is strongly periodically D\mbox{-}de\-tect\-able as claimed.
  \hfill$\diamond$
  \end{exmp}

  As already pointed out, the observer of a DES with a single observable event consists of a single trajectory ending with a cycle. We now formulate the consequences of this observation.
  
  \begin{thm}\label{thm6}
    For a DES with a single observable event, deciding weak detectability coincides with deciding strong detectability, and deciding weak periodic detectability coincides with deciding strong periodic detectability. The same holds true for D\mbox{-}de\-tect\-ability.
  \end{thm}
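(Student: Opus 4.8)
The plan is to reduce all four equivalences to a single observation about the observations generated by infinite trajectories. Since plain detectability is the special case of D\mbox{-}detectability with $T_{spec}=Q\times Q\setminus\{(q,q)\mid q\in Q\}$, it suffices to prove the two D\mbox{-}detectability equivalences, and in each of them the implication ``strong $\Rightarrow$ weak'' is immediate and holds for \emph{every} DES: deadlock-freeness (Assumption~1) gives $L^\omega(G)\neq\emptyset$, so any statement of the form $(\exists n)(\forall s\in L^\omega(G))[\cdots]$ entails $(\exists n)(\exists s\in L^\omega(G))[\cdots]$. Hence the content lies in the reverse implication ``weak $\Rightarrow$ strong'', and this is where having a single observable event is used.

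The key step I would isolate first is the following. Let $\Sigma_o=\{a\}$. Assumption~2 (no purely unobservable loop) forces every infinite trajectory $s\in L^\omega(G)$ to contain infinitely many observable events: otherwise a suffix of $s$ would be an infinite path on finitely many states using only unobservable events, which must revisit a state and thus close an unobservable loop. Consequently $P(s)=a^\omega$ for every $s\in L^\omega(G)$. Two facts follow already for a single fixed $s$: the set $\{P(t)\mid t\in Pr(s)\}$ equals $\{a^j\mid j\ge 0\}$ (as $t$ runs along $s$, $|P(t)|$ increases by $0$ or $1$ at each step and is unbounded, hence takes every value), and for every $t\in Pr(s)$ and every $\ell\ge|P(t)|$ there is $t'$ with $tt'\in Pr(s)$ and $|P(tt')|=\ell$.

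Then I would plug these facts into the four definitions to collapse the quantifier over trajectories. Writing $R(I,a^j)$ for the observer state reached after reading $a^j$, put $D=\{j\ge 0\mid (R(I,a^j)\times R(I,a^j))\cap T_{spec}=\emptyset\}$; membership $j\in D$ is decidable and depends only on $j$. Using the two facts, strong D\mbox{-}detectability and weak D\mbox{-}detectability each reduce to the single condition ``$D$ is cofinite'', i.e.\ $(\exists n)(\forall j>n)\,j\in D$; and strong periodic D\mbox{-}detectability and weak periodic D\mbox{-}detectability each reduce to the single condition ``$D$ has bounded gaps'', i.e.\ $(\exists n)(\forall\ell\ge 0)(\exists\ell'\in\{\ell,\ldots,\ell+n-1\})\,\ell'\in D$. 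In each pair the two notions are characterised by the same condition, so they coincide; specialising $T_{spec}$ as above yields the detectability statements, and together these give Theorem~\ref{thm6}.

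The step I expect to be delicate is the periodic case: I must argue that for a single witness trajectory $s$ the inner alternation $(\forall t\in Pr(s))(\exists t'\colon tt'\in Pr(s)\wedge|P(t')|<n\wedge\cdots)$ already exhausts all relevant observations $a^\ell$ and all forward windows $\{\ell,\ldots,\ell+n-1\}$, so that neither the outer $\forall s$ of the strong version nor the outer $\exists s$ of the weak version contributes anything. This is precisely where Assumption~2 is essential, since it rules out trajectories whose observation is only a proper finite prefix of $a^\omega$, which would otherwise break the equivalence.
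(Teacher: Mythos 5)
Your proposal is correct and rests on the same core observation as the paper's proof: with a single observable event, Assumptions 1 and 2 force every infinite trajectory to have observation $a^\omega$, so the conditions depend only on the observer states $R(I,a^j)$ and the quantifiers $\forall s$ and $\exists s$ collapse. The paper states this more tersely (phrasing it as the observer being a single lasso-shaped trajectory), while you spell out the reduction to cofiniteness and bounded gaps of the set $D$ explicitly, but the argument is essentially identical.
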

  \begin{proof}
    Let $G$ be a DES with $n$ states and a single observable event $\Sigma_o=\{a\}$. Then the observer of $G$ consists of a sequence of $k$ states followed by a cycle consisting of $\ell$ states, that is, the language of $G$ is $a^k (a^{\ell})^*$ with $k+\ell \le 2^n$. The existence of the cycle follows from the two assumptions on Page~\pageref{AS}. Since the observer of $G$ consists of a single infinite trajectory, deciding strong (periodic) detectability coincides with deciding weak (periodic) detectability; the same holds when we replace detectability with D\mbox{-}de\-tect\-ability.
  \end{proof}

\section{A Tractable Case}\label{secRpoNFAs}
  In the previous sections, we have shown that deciding strong periodic D\mbox{-}de\-tect\-ability is a difficult problem for DES modeled by NFAs. In this section, we discuss a special case of systems for which the verification of strong periodic D\mbox{-}de\-tect\-ability is polynomial.
  
  Looking for a class of DESs, for which the problem is tractable, we first inspect the proof of Theorem~\ref{thm_spd-ps}. This reveals that the proof is based on the intersection emptiness problem for DFAs that was shown by Kozen~\cite{Kozen77} to be \PSpace-complete. His proof heavily relies on DFAs with cycles. Allowing only self-loops instead of cycles in the DFAs makes the problem easier~\cite{arxiv2019}. Therefore, we consider DESs modeled by NFAs where all cycles in the transition graph are only self-loops. A self-loop may be added to any state, and hence the NFA may fulfill the deadlock-free requirement. Such NFAs recognize a strict subclass of regular languages that are strictly included in {\em star-free languages}~\cite{BrzozowskiF80,mfcs16:mktmmt_full}. Star-free languages are languages definable by {\em linear temporal logic\/}, which is a logic widely used as a specification language in automated verification. 
  
  A practical motivation for such systems comes from the following observation. Every infinite trajectory in a system describes a task that is possibly repeated ad infinitum. Indeed, every task is a finite sequence of events, though for the modeling purposes some repetitions of subtasks may be modeled as cycles. In some cases and on same level of abstraction, these internal cycles could be seen as self-loops. This results in a system with only self-loops. Now we need to model the situation that the task has been finished and the whole process can be restarted. This can be done by the repeated generation of a special event telling the system to restart the specific task.

  Let $\A=(Q,\Sigma,\delta,I,F)$ be an NFA. The reachability relation $\le$ on the state set $Q$ is defined by $p\le q$ if there is $w\in \Sigma^*$ such that $q\in \delta(p,w)$. The NFA $\A$ is {\em restricted partially ordered (rpoNFA)\/}  if the reachability relation $\le$ is a partial order and $\A$ is self-loop deterministic in the sense that the pattern of Fig.~\ref{fig_bad_pattern} does not appear. Formally, for every state $q$ and every event $a$, if $q\in \delta(q,a)$ then $\delta(q,a) = \{q\}$.
  \begin{figure}
    \centering
    \begin{tikzpicture}[baseline,->,>=stealth,auto,shorten >=1pt,node distance=2cm,thick,
      state/.style={circle,minimum size=5mm,inner sep=2pt,thick,draw=black,initial text=}]
      \node[state]  (a) {};
      \node[state]  (aa) [right of=a]  {};
      \path
        (a) edge[loop above] node {$a$} (a)
        (a) edge node {$a$} (aa)
        ;
    \end{tikzpicture}
    \caption{The forbidden pattern of rpoNFAs}
    \label{fig_bad_pattern}
  \end{figure}
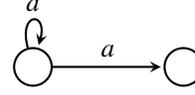
  We now formally define so-called rpoDES. The name comes from restricted partially ordered DES.
  \begin{defn}
    Let $G=(Q,\Sigma,\delta,I)$ be a DES with $\Sigma_o$ being the set of observable events. Let $P\colon \Sigma^* \to \Sigma_o^*$ be the corresponding projection. We say that $G$ is an rpoDES if the NFA $P(G)=(Q,\Sigma_o,\delta',I)$ obtained from $G$ by replacing every transition $(p,a,q)$ by $(p,P(a),q)$, and by eliminating the $\varepsilon$-transitions~\cite{Hop2007} is an rpoNFA.
  \end{defn}
  
  Notice that $P(G)$ is an NFA that can be constructed from $G$ in polynomial time~\cite{Hop2007}. Therefore, the question whether a DES is an rpoDES is decidable in polynomial time.
  
  What do we know about rpoDES? Deciding weak (periodic) detectability for rpoDESs is \PSpace-complete~\cite{Masopust2018b}, and hence so is deciding weak (periodic) D\mbox{-}de\-tect\-ability. 
  
  We now show that the complexity of deciding strong periodic D\mbox{-}de\-tect\-ability for rpoDESs coincides with the complexity of deciding strong D\mbox{-}de\-tect\-ability.

  \begin{thm}\label{thmRpoNFAs}
    Deciding strong periodic D\mbox{-}de\-tect\-ability for rpoDESs is \NL-complete.
  \end{thm}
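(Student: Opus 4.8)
The plan is to prove membership in \NL and \NL-hardness separately; the problem then falls into the same class as its non-periodic counterpart (Theorem~\ref{thm_sdnl-c}).

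For \NL-hardness I would reuse the reduction from DAG non-reachability~\cite{ChoH91} used in the proof of Theorem~\ref{thm_sdnl-c}. Given the DAG and nodes $s,t$, we may assume $t$ is a sink (otherwise add a fresh sink $t'$ with an edge $t\to t'$ and ask about $t'$). In the DES $\A$ built there, the only cycles are the self-loops on $x$ and on $t$, and since $t$ is a sink we have $\delta(t,a)=\{t\}$; hence $\A$ is self-loop deterministic and its reachability relation is a partial order, so $\A$ is an rpoDES (all events observable, so $P(\A)=\A$). Moreover, for all large $k$, $\delta(s,a^k)$ equals $\{x\}$ if $t$ is not reachable from $s$ and $\{t,x\}$ otherwise; with $T_{spec}=\{(t,x)\}$ and $\{x\}$ a singleton, this shows that $\A$ is strongly periodically D\mbox{-}detectable if and only if $t$ is not reachable from $s$.

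For membership in \NL I would first recall the observer characterisation: calling a state of the observer of $G$ \emph{bad} if it contains a pair from $T_{spec}$, the DES $G$ is strongly periodically D\mbox{-}detectable iff no cycle of the observer reachable from its initial state consists solely of bad states (the forward direction pumps such a bad cycle into a trajectory violating the property; the converse uses that a path of the observer consisting only of bad states is simple, hence shorter than the number of observer states, which yields a uniform $n$). The decisive structural fact about rpoDES, which I would either take from the theory of restricted partially ordered automata~\cite{mfcs16:mktmmt_full,BrzozowskiF80} or prove directly, is that \emph{in the observer of an rpoNFA every cycle is a self-loop}. A direct proof assigns to an observer state $Z$ the potential $\Phi(Z)=\sum_{z\in Z}(|Q|+1)^{h(z)}$, where $h(z)$ is the length of a longest chain of the reachability order starting at $z$; using self-loop determinism one verifies that an observer transition $Z\xrightarrow{a}Z'$ either has $Z'=Z$ (exactly when every state of $Z$ has an $a$-loop) or strictly decreases $\Phi$, so a cycle of the observer can only use transitions with $Z'=Z$ and is therefore a self-loop.

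Consequently, for an rpoDES, an entirely-bad reachable cycle of the observer is exactly a reachable observer state $X$ with $X\xrightarrow{a}X$ for some observable $a$ and $X$ bad. Since reading $a$ repeatedly from any set of states stabilises (again by the partial-order/self-loop structure), every such $X$ is the $a$-stabilisation $R(I,va^{|Q|})$ of $R(I,v)$ for some $v$, and one checks that $R(I,va^{|Q|})$ consists precisely of the states $p$ that have an $a$-loop in $P(G)$ and are reachable from some state of $R(I,v)$ by a string in $a^{*}$. Hence $G$ is \emph{not} strongly periodically D\mbox{-}detectable iff there are an observable event $a$ and states $p,q$ with $(p,q)\in T_{spec}$ such that both $p$ and $q$ have an $a$-loop in $P(G)$ and there is one string $v$ for which $p$ and $q$ are $a^{*}$-reachable in $P(G)$ from states of $R(I,v)$. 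This condition is in \NL: guess $a$ and $(p,q)$; check the $a$-loops by a constrained path search in $G$; and check, in the product automaton $P(G)\times P(G)$, that some pair $(y_{1},y_{2})$ is reachable from $R(I,\varepsilon)\times R(I,\varepsilon)$ with $p$ reachable from $y_{1}$ and $q$ from $y_{2}$ using only $a$-labelled transitions of $P(G)$. As \NL is closed under complement~\cite{Immerman88,Szelepcsenyi87}, deciding strong periodic D\mbox{-}detectability for rpoDES is in \NL, and with the hardness part it is \NL-complete. The main obstacle is the structural lemma on observer cycles of rpoNFAs; granted that lemma, replacing the search for an infinite bad run by the above logarithmic-space reachability question, and handling the ``same $v$'' requirement via the product automaton, are the remaining points to get right.
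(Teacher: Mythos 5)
Your proof is correct, and both halves ultimately rest on the same pivot as the paper's: the structural lemma that every cycle in the observer of an rpoNFA is a self-loop, i.e., the observer is partially ordered. You establish that lemma by a potential-function argument ($\Phi$ strictly decreases unless every state of the current observer state has an $a$-self-loop), whereas the paper argues by contradiction on the minimal state, in the reachability order, of an observer state lying on a cycle; the two arguments are interchangeable. Where you genuinely diverge is in how the lemma is exploited. The paper's route is shorter: once the observer is partially ordered, every infinite observation stabilises in a single self-looping state, so strong periodic D\mbox{-}de\-tect\-ability coincides with strong D\mbox{-}de\-tect\-ability on rpoDESs, and Theorem~\ref{thm_sdnl-c} then supplies both membership and hardness --- hardness because, as you also note, the DAG-non-reachability gadget of Fig.~\ref{fig1} is itself a unary rpoDES (cf.\ Corollary~\ref{cor1}); your extra care in making $t$ a sink is warranted, since otherwise $\delta(t,a)$ could contain both $t$ and a successor of $t$, violating self-loop determinism. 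You instead unfold the membership half into an explicit \NL algorithm: you characterise a violation as a reachable bad observer state carrying an $a$-self-loop, identify such states with the stabilisations $R(I,va^{|Q|})$, and resolve the ``same $v$'' constraint with a product construction. This buys a self-contained algorithm that never passes through the non-periodic problem, at the cost of several additional verifications (the observer characterisation of strong periodic D\mbox{-}de\-tect\-ability, the description of $R(I,va^{|Q|})$, and the on-the-fly \NL computability of the transitions of $P(G)$), all of which you handle correctly.
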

  \begin{proof}
    To prove the theorem, we show that the observer of an rpoDES is a partially ordered DFA. Then, since there are no nontrivial cycles in the observer, strong periodic D\mbox{-}de\-tect\-ability coincides with strong D\mbox{-}de\-tect\-ability. Theorem~\ref{thm_sdnl-c} then finishes the proof.
  
    Let $\A=(Q,\Sigma_o,\delta,I,F)$ be an rpoNFA. We show that the DFA $\D$ computed from $\A$ by the standard subset construction (i.e., the observer) is partially ordered.
    To this aim, let $X=\{p_1,\ldots,p_n\}$ with $p_i < p_j$ for $i<j$ be a state of $\D$, and let $w\in\Sigma^*$ be a nonempty string such that $\delta_{\D}(X,w) = X$. First, we show that $\delta(p_i,w) = \{p_i\}$ for all $i$. For the sake of contradiction, let $1\le k\le n$ be the minimal integer such that $\delta(p_k,w) \neq \{p_k\}$. Since $X = \delta(X,w) = \cup_{i=1}^{n} \delta(p_i,w)$, $\delta(p_i,w) = \{p_i\}$ for all $i<k$, and $p_k < p_i \le \delta(p_i,w)$ for all $i>k$, we have that $p_k \notin \cup_{i=1}^{n} \delta(p_i,w) = X$, which is a contradiction.\footnote{By $p_i \le \delta(p_i,w)$ we mean $p_i\le q$ for every $q\in\delta(p_i,w)$.} Therefore, $p_k \in \delta(p_k,w)$, and the definition of rpoNFAs implies that every event of $w$ is in a self-loop in state $p_k$. Because rpoNFAs have no choice between staying in the state and leaving it under the same event, $\delta(p_k,w) = \{p_k\}$. Thus, for $i=1,\ldots,n$, $\delta(p_i,a) = \{p_i\}$ for every event $a$ occurring in $w$. Consequently, for any state $Y$ of $\D$ and any strings $w_1$ and $w_2$, if $\delta(X,w_1) = Y$ and $\delta(Y,w_2) = X$, the previous argument gives that $X=Y$, and hence $\D$ is partially ordered.
  \end{proof}

  Finally, we have the following corollary of Theorem~\ref{thm_sdnl-c}.
  \begin{cor}\label{cor1}
    Deciding whether an rpoDES with a single observable event is strongly (D\mbox{-})detect\-able is \NL-complete.
  \end{cor}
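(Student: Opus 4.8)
The plan is as follows. Membership in \NL is immediate: an rpoDES with a single observable event is in particular a DES, so Theorem~\ref{thm_sdnl-c} already provides an \NL algorithm for strong D\mbox{-}de\-tect\-ability, and the \NL algorithm for strong detectability of~\cite{Masopust2018b} applies verbatim. For the lower bound I would reuse the \NL-hardness construction from the proof of Theorem~\ref{thm_sdnl-c}: from an instance $(G=(V,E),s,t)$ of DAG non-reachability we build the DES $\A=(V\cup\{x\},\{a\},\delta,s)$ over the single observable event $a$, as in Fig.~\ref{fig1}, and the point is simply that this $\A$ is already an rpoDES.

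First I would record two harmless normalisations of the input DAG, both computable in logarithmic space and neither affecting whether $t$ is reachable from $s$: we may assume that $G$ has no self-loops (automatic for an acyclic graph) and that $t$ is a sink, i.e.\ has no outgoing edges (just delete the outgoing edges of $t$). The key step is then to check that $\A$ is an rpoDES. Since every event of $\A$ is observable and $\A$ has no unobservable (hence no $\varepsilon$-) transitions, the NFA $P(\A)$ coincides with $\A$, so it suffices to verify that $\A$ is an rpoNFA. The transitions of $\A$ are the edges of $G$, the edges $(p,a,x)$ for $p\in V\setminus\{t\}$, and the self-loops $(x,a,x)$ and $(t,a,t)$; because $G$ is acyclic, the only cycles of $\A$ are these two self-loops, so the reachability relation of $\A$ is antisymmetric and thus a partial order. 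Moreover $\delta(x,a)=\{x\}$ and $\delta(t,a)=\{t\}$ (the latter because $t$ is a sink and $t\notin V\setminus\{t\}$, so there is no transition $(t,a,x)$), while no state of $V\setminus\{t\}$ carries a self-loop; hence the forbidden pattern of Fig.~\ref{fig_bad_pattern} never occurs and $\A$ is self-loop deterministic. So $\A$ is an rpoDES with a single observable event.

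It then remains to read off correctness from the proof of Theorem~\ref{thm_sdnl-c}. With $T_{spec}=\{(t,x)\}$ that proof shows that $\A$ is strongly D\mbox{-}de\-tect\-able if and only if $t$ is not reachable from $s$ in $G$; and taking instead $T_{spec}=Q\times Q\setminus\{(q,q)\mid q\in Q\}$, under which strong D\mbox{-}de\-tect\-ability collapses to strong detectability, the same computation shows $\delta(s,a^{k})=\{x\}$ for all $k\ge|V|$ when $t$ is unreachable from $s$ and $\delta(s,a^{k})=\{t,x\}$ for all $k\ge|V|$ otherwise, so $\A$ is strongly detectable exactly when $t$ is not reachable from $s$. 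Either choice of $T_{spec}$ yields a logarithmic-space reduction from DAG non-reachability (which is \NL-complete), giving \NL-hardness of both problems already for rpoDESs with a single observable event. I expect the only slightly delicate point to be the verification of the two rpoNFA conditions for $\A$ together with the remark that the two normalisations of $G$ are log-space computable; this is routine and poses no real obstacle, so the corollary follows.
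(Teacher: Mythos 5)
Your proposal is correct and follows essentially the same route as the paper, which simply observes that the hardness construction of Theorem~\ref{thm_sdnl-c} already produces a unary rpoNFA and invokes that theorem for membership. Your extra normalisation making $t$ a sink is a worthwhile detail the paper leaves implicit (without it, an outgoing edge of $t$ together with the self-loop $(t,a,t)$ would create the forbidden pattern of Fig.~\ref{fig_bad_pattern}), but it does not change the approach.
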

  \begin{proof}
    This result is an immediate consequence of the proof of Theorem~\ref{thm_sdnl-c}, since in the hardness part we actually construct a unary rpoNFA, cf. Fig.~\ref{fig1}. 
  \end{proof}

\section{Conclusions}
  In this paper, we answered the open question concerning the complexity of deciding whether a DES satisfies strong periodic D\mbox{-}de\-tect\-ability, and provided a full complexity picture of this problem. Since the results for DES are mainly negative, we also discussed a class of DESs, so-called rpoDESs, for which the complexity of deciding strong (periodic) D\mbox{-}de\-tect\-ability is tractable.

\ack{Supported by the Ministry of Education, Youth and Sports under the INTER-EXCELLENCE project LTAUSA19098, by the Czech Science Foundation project GC19-06175J, by IGA PrF 2020 019, and by RVO~67985840.}

\bibliographystyle{plain}
\bibliography{mybib}

\end{document}